\algrenewcommand\algorithmicrequire{\textbf{Input:}}
\algrenewcommand\algorithmicensure{\textbf{Output:}}
\newtheorem{theorem}{Theorem}
\newtheorem{proof}{Proof}
\newtheorem{corollary}{Corollary}
\renewcommand{\(}{\left(}
\renewcommand{\)}{\right)}
\newcommand{\Tinterval}{T}
\newcommand{\Nveh}{N_\mathrm{v}}
\newcommand{\Vehicles}{\mathcal{V}}
\newcommand{\Link}{\mathcal{L}}
\newcommand{\ds}{\bm{d}}
\newcommand{\op}{t}
\newcommand{\IsOp}{\bm{t}}
\newcommand{\Operation}{\mathcal{T}}
\newcommand{\weight}{W}
\newcommand{\NumSlot}{\tau_\mathrm{max}}
\newcommand{\tend}{\tau_\mathrm{end}}
\newcommand{\dist}{\mathit{dist}}
\newcommand{\centerpos}{p_\mathrm{c}}
\newcommand{\Loss}{\mathit{LOSS}}
\newcommand{\Conflict}{\mathcal{C}}
\newcommand{\SchedGraph}[1][]{\bm{G}_{\mathrm{s}#1}}
\newcommand{\VehGraph}[1][]{\bm{G}_{\mathrm{v}#1}}
\newcommand{\NormalCov}[1][]{\hat{S}_{#1}}
\newcommand{\EachCov}{S}
\newcommand{\Region}{R}
\newcommand{\RegionAll}{R_\mathrm{all}}
\newcommand{\SensorRange}{r_\mathrm{s}}
\newcommand{\BandWidth}{B}
\newcommand{\Noise}{N}
\newcommand{\GainTx}{G_\mathrm{t}}
\newcommand{\GainRx}{G_\mathrm{r}}
\newcommand{\SINR}{\mathit{SINR}}
\newcommand{\sinr}{\mathit{sinr}}
\newcommand{\Rate}{\mathit{Rate}}
\newcommand{\Itfr}{I}
\newcommand{\PowerTx}{P_\mathrm{t}}
\newcommand{\PowerRx}[1][]{P_{\mathrm{r}#1}}
\newcommand{\Threshold}{\theta}
\newcommand{\Neighbor}{\mathcal{N}_{\VehGraph}}
\newcommand{\irx}{i_\mathrm{r}}
\newcommand{\itx}{i_\mathrm{t}}
\newcommand{\jrx}{j_\mathrm{r}}
\newcommand{\jtx}{j_\mathrm{t}}
\newcommand{\krx}{k_\mathrm{r}}
\newcommand{\ktx}{k_\mathrm{t}}
\newcommand{\IntVDist}{l_\mathrm{avg}}
\newcommand{\DistNum}[2]{\(#1\,\mathrm{m},#2\)}
\title{Concurrent Transmission Scheduling for Perceptual Data Sharing \\ in mmWave Vehicular Networks}
\begin{document}
\maketitle
\begin{summary}
  Sharing perceptual data (e.g., camera and LiDAR data) with other vehicles
  enhances the traffic safety of autonomous vehicles
  because it helps vehicles locate other vehicles and pedestrians in their blind spots.
  Such safety applications require high throughput and short delay,
  which cannot be achieved by conventional microwave vehicular communication systems.
  Therefore, millimeter-wave (mmWave) communications are considered to be a key technology
  for sharing perceptual data because of their wide bandwidth.
  One of the challenges of data sharing in mmWave communications is broadcasting
  because narrow-beam directional antennas are used to obtain high gain.
  Because many vehicles should share their perceptual data to others within a short time frame
  in order to enlarge the areas that can be perceived based on shared perceptual data,
  an efficient scheduling for concurrent transmission that improves spatial reuse is required
  for perceptual data sharing.
  This paper proposes a data sharing algorithm that employs a graph-based concurrent transmission scheduling.
  The proposed algorithm realizes concurrent transmission to improve spatial reuse
  by designing a rule that is utilized to determine
  if the two pairs of transmitters and receivers interfere with each other
  by considering the radio propagation characteristics of narrow-beam antennas.
  A prioritization method that considers the geographical information in perceptual data is also designed
  to enlarge perceivable areas
  in situations where data sharing time is limited and not all data can be shared.
  Simulation results demonstrate that the proposed algorithm
  doubles the area of the cooperatively perceivable region
  compared with a conventional algorithm that does not consider mmWave communications
  because the proposed algorithm achieves high-throughput transmission
  by improving spatial reuse.
  The prioritization also enlarges the perceivable region by a maximum of 20\%.
\end{summary}
\begin{keywords}
mmWave communications, VANET, data sharing, directional antenna, concurrent transmission scheduling
\end{keywords}

\section{Introduction}
Millimeter-wave (mmWave) vehicular adhoc networks (VANETs)
are expected to be an enabler of numerous safety applications for autonomous vehicles
that require high-throughput transmission capability
\cite{MmWaveVanetSurvey,eband,PathLossPrediction,perfecto2017millimeter,wu2017cooperative}.
As vehicles become increasingly automated,
the number of sensors equipped on vehicles increases
and an increasingly massive amount of data are generated while driving.
Sharing these sensor data, such as camera and LiDAR data,
would help extend a vehicle's perceptual range to cover its blind spots or locate hidden objects.
However, a sufficient data rate for sharing sensor data cannot be provided
by currently standardized vehicular communication systems
(e.g., IEEE 802.11p/dedicated short range communications (DSRC)
and cellular vehicle-to-everything (C-V2X),
standardized in the third generation partnership project (3GPP) Release 14 \cite{molina2017lte})
because of their limited bandwidth.
Therefore, mmWave communications,
which provide high-throughput communication
by leveraging huge bandwidth and efficient spatial reuse,
have been attracting much attention for vehicular communications.

One of the most important traffic safety applications facilitated by mmWave communications is cooperative perception,
which enables autonomous vehicles to perceive their blind spots
by sharing perceptual data, such as camera, LiDAR, and radar data, with other vehicles.
For example, see-through systems provide following vehicles with front views of the leader of platooning vehicles
and bird's-eye-view systems generate top views of surrounding areas by aggregating perceptual data of multiple vehicles
\cite{kim2015multivehicle,li2011multi}.
Such techniques are particularly important at intersections with poor visibility to avoid car crash.
By sharing information regarding their surroundings,
the region that autonomous vehicles can perceive is enlarged based on the shared information.
Computer vision systems enable vehicles to recognize other vehicles, pedestrians, and traffic signs,
even if they cannot be seen directly because buildings or other obstacles block the line of sight.
To cover the entire area surrounding an intersection,
vehicles near the intersection should send their massive data to the other vehicles
within a short period, in particular 100\,ms for safety applications \cite{VehNetworking}.
For example, assume 20 vehicles attempt to share compressed camera images within 100\,ms.
The image sizes range from 1--9\,Mbit because they are generated at rates of 10--90\,Mbit/s \cite{MmWaveVANET}.
Therefore, 20--180\,Mbit of data must be transmitted within 100\,ms by 20 vehicles,
meaning each datum must be transmitted at a rate of 0.2--1.8\,Gbit/s.
Such a high-throughput system is difficult to be realized
by DSRC or C-V2X owing to their limited bandwidth.

Although mmWave communications enable high-throughput transmission,
it is difficult to broadcast data to all vehicles compared with microwave communications
because few vehicles can receive transmitted signals
because of narrow-beam directional antennas and severe attenuation by the blockage effect.
Therefore, an efficient mechanism to share perceptual data in mmWave multihop networks should be developed.
As mentioned above, vehicles are required to share perceptual data
and obtain data of as wide region as possible within 100\,ms.
To meet these requirements,
concurrent transmission and routing with cached data are promising approaches.
Concurrent transmission,
where many transmitters send data to different receivers at the same time,
promotes efficient spatial reuse,
which is realized by leveraging antenna directionality and high attenuation.
On the other hand,
routing using cached data reduces redundant transmissions for data sharing in multihop networks
because each datum is requested to be sent to many different vehicles.
In multihop networks, if relay vehicles store the forwarded data,
the source vehicles do not need to transmit the same data many times.
Leveraging the geographical information in perceptual data also helps
to enlarge perceivable regions.

There have been a few studies on concurrent transmissions in mmWave VANET.
For example, \cite{perfecto2017millimeter} proposed a beam-width-controlling scheme to reduce beam-alignment delay
by considering signal-to-interference plus noise power ratio (SINR).
Most concurrent transmission protocols for mmWave communications
are found not in VANETs, but in wireless sensor networks (WSNs)
\cite{qiao2012stdma,niu2015blockage,wang2014throughput}.
However, such protocols do not adopt data caching because their objectives are not data sharing.
In data sharing, the same data are sent from the source vehicles to different vehicles
and thus, the same data might be transmitted redundantly without data caching.
Additionally, their algorithms do not consider the geographical information in transmitted data.
There have been many studies on data dissemination methods for DSRC-based VANETs,
some of which utilize the geographical information in disseminated data.
\cite{wischhof2005information,bronsted2006specification}
proposed the data aggregation of the geographical information
to suppress redundant data broadcasts.
\cite{yamada2017data} proposed controlling the frequency of broadcasting.
However, these studies did not discuss concurrent transmission
or multihop routing with directional antennas.

Concurrent dissemination with data caching was proposed in \cite{coopDataSched},
where the authors presented a system to realize
a road-side-unit (RSU)-controlled concurrent dissemination by two communication mode:
vehicle-to-infrastructure (V2I) and vehicle-to-vehicle (V2V) communications.
\cite{coopDataSched} proposed a graph-based algorithm,
where potential transmissions (from which, to which, and which data should be transmitted)
and their conflicts (e.g. half duplex and interference constraints)
are represented as a graph,
i.e., two transmissions are connected when they cannot be operated at the same time.
Each vertex has weight that represents the priority of receiver vehicles.
Then, the optimal concurrent transmission schedule for multihop dissemination
can be obtained by solving the maximum weighted independent set (MWIS) problem on the graph.
Although the MWIS problem is one of the NP-hard problems,
a greedy algorithm with a performance guarantee to maximize the total vertex weights can be utilized.
However, \cite{coopDataSched} does not assume mmWave communications
and thus, it cannot be used directly for mmWave communications.
There is also room to utilize the geographical information in the transmitted data for cooperative perceptions.

In this paper,
we propose a mmWave data sharing algorithm
where vehicles share perceptual data with each other
and enlarge the perceivable regions.
The proposed algorithm is based on \cite{coopDataSched}
to realize concurrent transmission with data caching.
Because the algorithm in \cite{coopDataSched} optimizes concurrent transmissions
considering not only pair selection of the transmitter and receiver
but also which data to transmit among the currently cached data,
it effectively reduces redundant transmission in data sharing,
where the same data are transmitted to different vehicles.
However, the original algorithm is based on microwave communications,
meaning it must be modified for mmWave communications.
\cite{coopDataSched} designed a conflict rule,
which is utilized to decide which pair of transmission vertices of the graph should be connected,
considering radio interference among omnidirectional antennas.
We design a new rule for mmWave communications
by estimating interference among narrow-beam directional antennas,
which are utilized for mmWave communications to obtain high gain.
Because the conflict rule should be defined between two transmissions,
we develop an interference approximation scheme
that can calculate the interference between two transmissions
without summing all possible interferences.
By using the newly designed rule,
near-optimal concurrent transmission in mmWave networks can be realized.

We also design a prioritization method in order to enlarge the perceivable region
for situations where data sharing time is limited.
Although \cite{coopDataSched} gave high priority to receiver vehicles
that soon run out of the service area of the RSU,
such a prioritization does not fit for cooperative perceptions at an intersection.
We give high priority to data corresponding to regions far from an intersection
to enlarge the perceivable area based on shared data
because regions near the center of the intersection are covered by many vehicles,
meaning it is desirable to transmit data far from the intersection.
Such a prioritization scheme can be realized by customizing the weight function of the MWIS problem.

The main contributions of this paper are summarized as follows:
(1) We propose a data sharing algorithm for cooperative perception,
which improves spatial reuse by considering interference among narrow-beam directional antennas
and increases the perceivable region
by prioritizing the data to be forwarded based on geographical information,
even if not all data can be collected.
In order to realize concurrent transmission,
we employ the algorithm presented in \cite{coopDataSched}.
(2) We prove that
if the data sharing time is sufficiently long and the vehicular network is represented as a connected graph,
the proposed data sharing algorithm guarantees
that all data are shared with all vehicles.

\section{Related Works} \label{sec:related}
Data sharing for cooperative perceptions should
achieve a large perceivable area within a short period,
in particular 100\,ms for safety applications.
Key techniques to meet this requirement are
concurrent transmission with directional antennas for improving system throughput,
efficient routing with cached data for reducing redundant transmission,
and leveraging geographical information in transmitted data.

Dissemination algorithms with directional antennas for VANETs have been studied by many researchers.
\cite{li2012collaborative} presented theoretical analysis of content dissemination time
in vehicular networks with directional antennas
and demonstrated that directional antennas accelerate content propagation.
\cite{li2007distance} proposed a broadcast protocol for directional antennas in VANETs.
In this protocol, the furthest receiver forwards data packets along road segments
and a directional repeater forwards the data in multiple directions at intersections.
In contrast to the protocol in \cite{li2007distance},
which considers the positions of transmitters,
our algorithm considers the positions where data are obtained
to achieve a large perceivable region.

Dissemination algorithms for local information were proposed
in \cite{wischhof2005information,bronsted2006specification,yamada2017data}.
In \cite{wischhof2005information},
a scalable dissemination protocol,
called segment-oriented data abstraction and dissemination (SODAD),
and its application, self-organizing traffic-information system (SOTIS), were proposed.
SOTIS is a mechanism for gathering traffic information sensed by vehicles.
It aggregates the received traffic information from road segments
and sends only up-to-date information to vehicles.
In \cite{bronsted2006specification},
Zone Flooding and Zone Diffusion were proposed to suppress redundant data broadcasting.
In Zone Flooding, only vehicles in a flooding zone forward received packets.
Zone Diffusion is a data aggregation method considering geographical information,
where vehicles merge road environment data as it is received
and broadcast only merged data.
\cite{yamada2017data} proposed controlling the frequency of information broadcasting
and selecting the data to send to reduce communication traffic.
Although these studies considered the geographical information in each datum,
they did not focus on concurrent transmission.

The authors of \cite{perfecto2017millimeter} proposed
vehicle pairing and beam-width controlling for mmWave VANETs.
In the protocol in \cite{perfecto2017millimeter},
pairs of transmitters and receivers are selected based on matching theory
and beam widths are determined via particle swarm optimization.
This protocol successfully improves throughput and reduces delay by considering SINR.
Other concurrent transmission methods for mmWave communications have been proposed for WSN, rather than VANETs
\cite{qiao2012stdma,niu2015blockage,wang2014throughput}.
The authors of \cite{qiao2012stdma} formulated the concurrent transmission scheduling problem
as an optimization problem to maximize the number of flows
to satisfy the quality-of-service requirements of each flow.
In \cite{niu2015blockage},
relay selection and spatial reuse were jointly optimized to improve network throughput
and a blockage robust algorithm was proposed.
The authors of \cite{wang2014throughput} minimized transmission time
by solving an optimization problem.
Although these algorithms for concurrent transmissions
presented in \cite{perfecto2017millimeter,qiao2012stdma,niu2015blockage,wang2014throughput}
achieved efficient spatial reuse,
redundant data were transmitted because their primary objective was not data sharing
and thus, they did not consider situations where the same data are sent to different receivers.
Additionally, they did not consider geographical information.

The authors of \cite{coopDataSched} proposed an RSU-controlled scheduling
that maximizes system throughput in hybrid V2I/V2V communications.
This algorithm realizes concurrent dissemination based on the graph theory.
It also adopts a data caching mechanism.
The algorithm proposed in \cite{coopDataSched}
generates graphs for dissemination scheduling,
where the set of vertices represents potential transmissions
consisting of a transmitter, receiver, and data,
and the set of edges represents pairs of transmissions that cannot be performed at the same time.
The authors of \cite{coopDataSched} proved that
optimal scheduling can be obtained
by solving the MWIS problem for a generated graph.
However, because the algorithm in \cite{coopDataSched} assumes omnidirectional antennas,
interference calculations must be extended for mmWave communications,
where narrow-beam directional antennas are utilized.
Additionally, there is still room to improve the efficiency of data transmissions for cooperative perception
by leveraging the geographical information in perceptual data.
Thus, a data sharing algorithm in mmWave vehicular networks
that increases perceivable regions should be developed for traffic safety,
especially when data sharing time is limited.

\begin{figure}[t] \centering
  \includegraphics[width=0.45\textwidth]{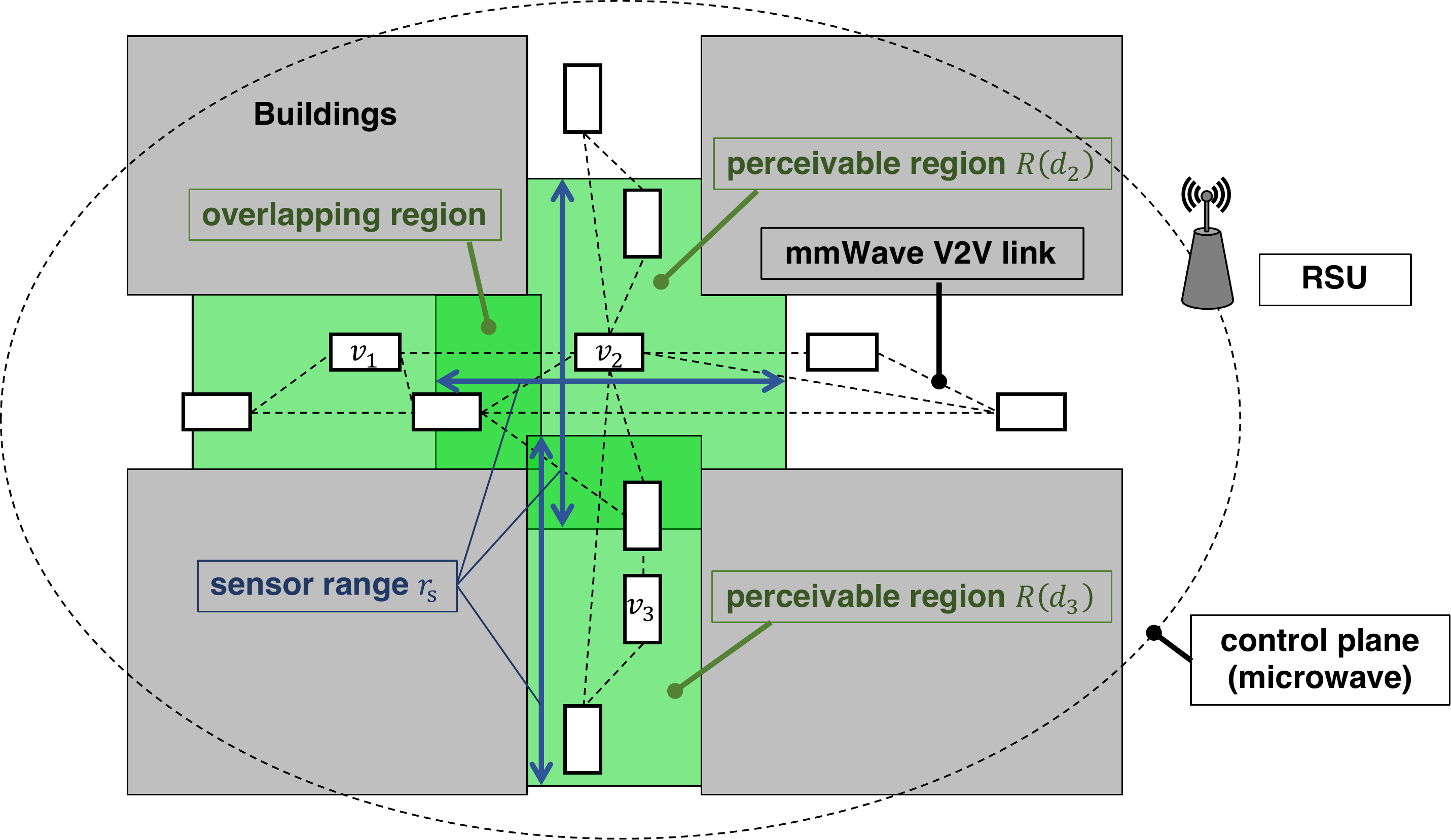}
  \caption{System model (Top view).}
  \label{fig:systemmodel}
\end{figure}

\section{System Model} \label{sec:system}
Figure~\ref{fig:systemmodel} shows our system model.
At an intersection,
there are vehicles equipped with mmWave communication devices for data transmission via V2V channels
and microwave communication devices for control signal transmission via V2I channels.
Vehicles participating in cooperative perception are selected
among vehicles within tens of meters from the center of the intersection considering stopping distance.
The number of participants is also limited to $\Nveh$ vehicles
because it is difficult to complete data sharing owing to the time limit
when the number of participants is large.
The vehicles perceive the surrounding environment
utilizing their sensors, such as LiDARs or cameras.
We assume that the vehicle sensors cover a surrounding rectangular region (on road segments)
or cross-shaped region (at the intersection),
bounded by the buildings along the roads
and their sensor range $\SensorRange$.
The data generated by vehicle $v_i$ is denoted as $d_i$.
We assume the sizes of $d_i$ are approximately the same among vehicles for simplicity.
The vehicles share the data with each other
to obtain information regarding the intersection
and then perform cooperative perception.

Data are transmitted through mmWave V2V channels
to reduce the pressure on V2I channels.
However, control signals, which must be broadcasted to all vehicles,
are transmitted through microwave V2I channels.
We assume there is an RSU (or an eNodeB) that covers all vehicles near the intersection on the microwave channel
and performs scheduling based on vehicle positions and mmWave V2V link topology.
While a large amount of sensor data are transmitted over the mmWave V2V channels,
control signals and position information, which are relatively small,
can be broadcasted by the RSU utilizing DSRC or C-V2X.

Figure~\ref{fig:time} shows the time frame for data sharing.
The vehicles perform sensing at an interval of $\Tinterval$ and generate $d_i$.
The data update interval consists of
the scheduling period and data sharing period.
In the scheduling period, data sharing scheduling is determined by the RSU.
Vehicle position information
obtained from global positioning system (GPS)
is sent to the RSU,
which then estimates the mmWave connectivity between vehicles
and determines the preferred data to be shared.

\begin{figure}[t] \centering
  \includegraphics[width=0.48\textwidth]{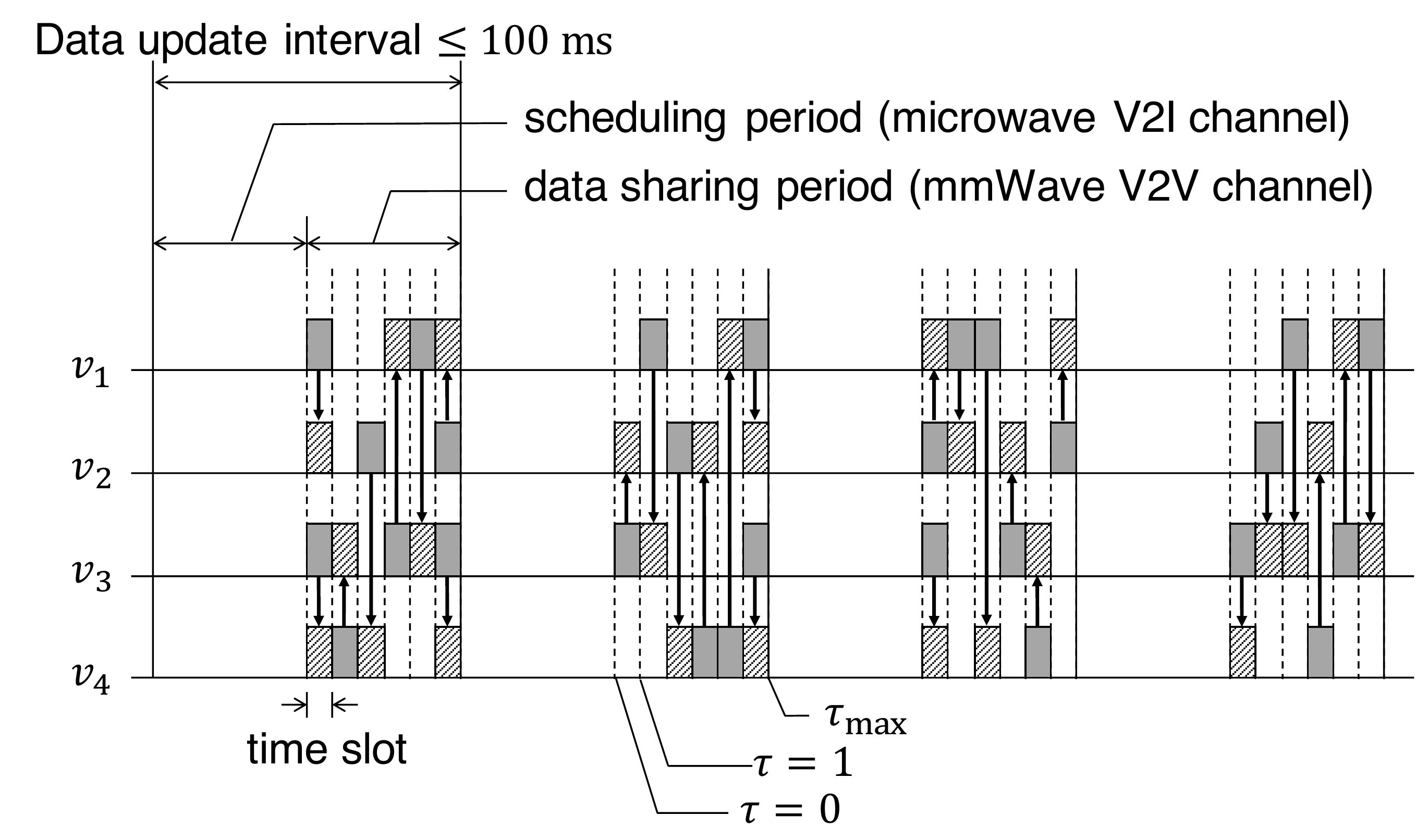}
  \caption{Time frame for data sharing.
           All vehicles generate their perceptual data at the beginning of each data update interval.
           They transmit their data during the data sharing period.}
  \label{fig:time}
\end{figure}

In the data sharing period,
vehicles share their data through mmWave V2V channels.
The data sharing period consists of $\NumSlot$ time slots,
each of which is sufficiently long to transmit one datum.
Let $\tau\in\{0,1,\dots,\NumSlot\}$ denote the index of a time slot.
$\NumSlot$ is limited by the transmitted data volume and data rate.

By sharing data $d_i$, the perceivable region is enlarged.
Let $\ds_{i,\tau}$ and $\Region_{i,\tau}$ denote
the dataset possessed by vehicle $v_i$
and the perceivable region of the dataset $\ds_{i,\tau}$, respectively.
$\Region_{i,\tau}$ is defined as
$\Region_{i,\tau} \coloneqq \bigcup_{d \in \ds_{i,\tau}}\Region(d)$,
where $\Region(d)$ denotes the perceivable region of $d$
(i.e., the region covered by the sensor of a single vehicle).
At the beginning of the data sharing period,
the datasets are initialized as $\ds_{i,0} \leftarrow \{d_i\}$.
When vehicle $v_i$ transmits $d_k \in \ds_{i,\tau}$ to vehicle $v_j$,
$v_j$ updates its dataset $\ds_{j,\tau}$ as follows:
\begin{align}
  \ds_{j,\tau+1} \leftarrow \ds_{j,\tau} \cup \{d_k\}. \label{eq:updatedataset}
\end{align}
Subsequently, the area of region $\Region_{j,\tau}$ is enlarged.
We evaluate system performance
based on the normalized perceivable area, which is defined as follows:
\begin{align}
  \NormalCov[\tau] &\coloneqq \EachCov(\Region_{i,\tau}) / \EachCov(\RegionAll), \label{eq:nrm}
\end{align}
where $\EachCov(\Region)$ and $\RegionAll$ denote
the area of region $\Region$
and the area covered by all data,
defined as $\RegionAll \coloneqq \bigcup_{i=1}^{\Nveh}\Region(d_i)$, respectively.

At the end of the data sharing period, the vehicles regenerate $d_i$ by sensing.
Then, the RSU collects vehicle position information and
determines scheduling for sharing new perceptual data in the following scheduling period.

\section{Data Sharing Algorithm} \label{seq:proposed}
In the scheduling period, the RSU selects
transmitters, receivers, and data to be transmitted during each time slot.
First, the RSU constructs a vehicular network graph
that represents the network topology of the mmWave vehicular network
by estimating the connectivity between vehicles
based on their positions and a propagation loss model.
Second, a graph that is utilized to determine concurrent transmission behavior
is constructed from the vehicular network graph for each time slot.
Because the vertices of the graph represent transmissions,
each of which consists of a transmitter, receiver, and data to be transmitted,
and the edges of the graph represent conflicts between two transmissions,
independent sets in the graph represent sets of transmissions that do not conflict with each other.
Therefore, by solving the MWIS problem for the graph,
which we call a scheduling graph,
the optimal concurrent transmission can be found for each time slot.

Although our algorithm is based on that proposed in \cite{coopDataSched},
our system model is quite different from that in \cite{coopDataSched}.
First, we consider a short period (i.e., 100\,ms),
while long-span dissemination was discussed in \cite{coopDataSched}.
Because \cite{coopDataSched} designed a prioritization based on vehicle mobility over a long period,
we modify the prioritization design to enlarge perceivable regions within a short period.
Second, our objective is to share data generated by vehicles with each other,
while \cite{coopDataSched} assumed that each vehicle requests data that is stored in the RSU.
We prove that data sharing can be completed by our algorithm
if the vehicular network graph is connected and there are enough time slots.
Third, we utilize mmWave communications for data transmission
and thus, we redesign how to construct the scheduling graph.
Especially, conflict rules between two potential transmissions
are modified to reflect the mmWave propagation characteristics.
Finally, data are transmitted through V2V channels in our system
to reduce the pressure on V2I channels,
while \cite{coopDataSched} utilized both V2I and V2V channels for data transmission.
The following subsections describe the details of constructing
a vehicular network graph and scheduling graph.

\subsection{Vehicular Network Graphs}
The RSU estimates the connectivity between each pair of vehicles
and defines the vehicular network graph $\VehGraph$ as follows:
\begin{align}
  \VehGraph &\coloneqq \(\Vehicles, \Link\), \label{eq:vehgraph} \\
  \Link &\coloneqq \{\{v_i, v_j\} \mid v_i, v_j \in \Vehicles, \Loss(v_i, v_j) \leq \Threshold\}, \label{eq:link}
\end{align}
where $\Vehicles$, $\Link$, $\Loss(v_i,v_j)$, and $\Threshold$ denote
the set of vehicles,
set of vehicle connections,
mmWave propagation loss between $v_i$ and $v_j$,
and a threshold that indicates that mmWave communications are possible, respectively.
The mmWave propagation loss can be obtained from the path loss models proposed in \cite{PathLossPrediction}.
The authors of \cite{PathLossPrediction} measured the propagation loss of 60-GHz mmWave channels
when there were one, two, or three vehicles between the transmitter and receiver.
For scenarios with more than three blockers, \cite{MmWaveVanetSurvey} provided an extension to the path loss model.
Another approach for predicting mmWave propagation loss was proposed in \cite{RNNbasedRSSPred},
where the authors predicted received signal power based on perceptual data.
The threshold $\Threshold$ is calculated based on the Shannon capacity as follows:
\begin{align}
  &\BandWidth \log_2 \(1 + \frac{\PowerTx \GainTx \GainRx / \Loss(v_i, v_j)}{\BandWidth \Noise}\) \geq \Rate, \\
  &\Threshold \coloneqq \frac{\PowerTx \GainTx \GainRx}{\BandWidth \Noise \( 2^{\Rate/\BandWidth} - 1 \)},
\end{align}
where $\BandWidth$, $\PowerTx$, $\GainTx$, $\GainRx$, $\Noise$, and $\Rate$ denote
the bandwidth, transmission power, transmitter and receiver antenna gain,
thermal noise power spectral density, and rate requirements, respectively.
When calculating the vehicle connectivity,
the antenna directions of the transmitter and receiver point at each other.
We also assume that $\VehGraph$ does not change within the data update interval
because the interval is very short (less than 100\,ms),
meaning the mobility of the vehicles is negligible.

\begin{algorithm}[t]
  \caption{Constructing a scheduling graph} \label{alg:sched_graph}
  \begin{algorithmic}[1]
    \Require {$\VehGraph=\(\Vehicles, \Link\), \ds_{i,\tau}$}
    \Ensure {$\SchedGraph[,\tau]$}
    \State {Initialize $\Operation_\tau \leftarrow \emptyset, \Conflict_\tau \leftarrow \emptyset$}
    \ForAll {$v_i \in \Vehicles$}
      \ForAll {$v_j \text{ in neighbors of } v_i$}
        \ForAll {$d_k \in \ds_{i,\tau}$}
          \If {$d_k \notin \ds_{j,\tau}$}
            \State $\Operation_\tau \leftarrow \Operation_\tau \cup \{\op_{ijk}\}$
          \EndIf
        \EndFor
      \EndFor
    \EndFor
    \ForAll {$\op_{ijk}, \op_{i'j'k'} \in \Operation_\tau$}
      \If {$\op_{ijk}$ and $\op_{i'j'k'}$ conflict with each other}
        \State {$\Conflict_\tau \leftarrow \Conflict_\tau \cup \{\{\op_{ijk}, \op_{i'j'k'}\}\}$}
      \EndIf
    \EndFor
    \State {$\SchedGraph[,\tau] \leftarrow (\Operation_\tau, \Conflict_\tau, \weight)$}
    \State \Return $\SchedGraph[,\tau]$
  \end{algorithmic}
\end{algorithm}

\begin{algorithm}[t]
  \caption{Data sharing scheduling} \label{alg:scheduling}
  \begin{algorithmic}[1]
    \Require {$\VehGraph=\(\Vehicles, \Link\)$.}
    \State Initialize $\ds_i \leftarrow \{d_i\}$ for all $i$
    \State Obtain $\SchedGraph[,0]$ from Algorithm~\ref{alg:sched_graph} with $\VehGraph, \ds_{i,0}$
    \State $\tau \leftarrow 0$
    \While {$\Operation_\tau \neq \emptyset \land \tau \le \NumSlot$}
      \State $\IsOp_\tau \leftarrow \text{ MWIS of } \SchedGraph[,\tau]$
      \State Perform $\IsOp_\tau$ and update $\ds_{i,\tau+1}$
      \State Obtain $\SchedGraph[,\tau+1]$ from Algorithm~\ref{alg:sched_graph} with $\VehGraph, \ds_{i,\tau+1}$
      \State $\tau \leftarrow \tau+1$
    \EndWhile
  \end{algorithmic}
\end{algorithm}

\subsection{Scheduling Graph and Data Sharing Scheduling}
\begin{figure}[t]
  \centering
  \subfloat[Vehicular network graph $\VehGraph$]{
    \includegraphics[width=0.28\textwidth]{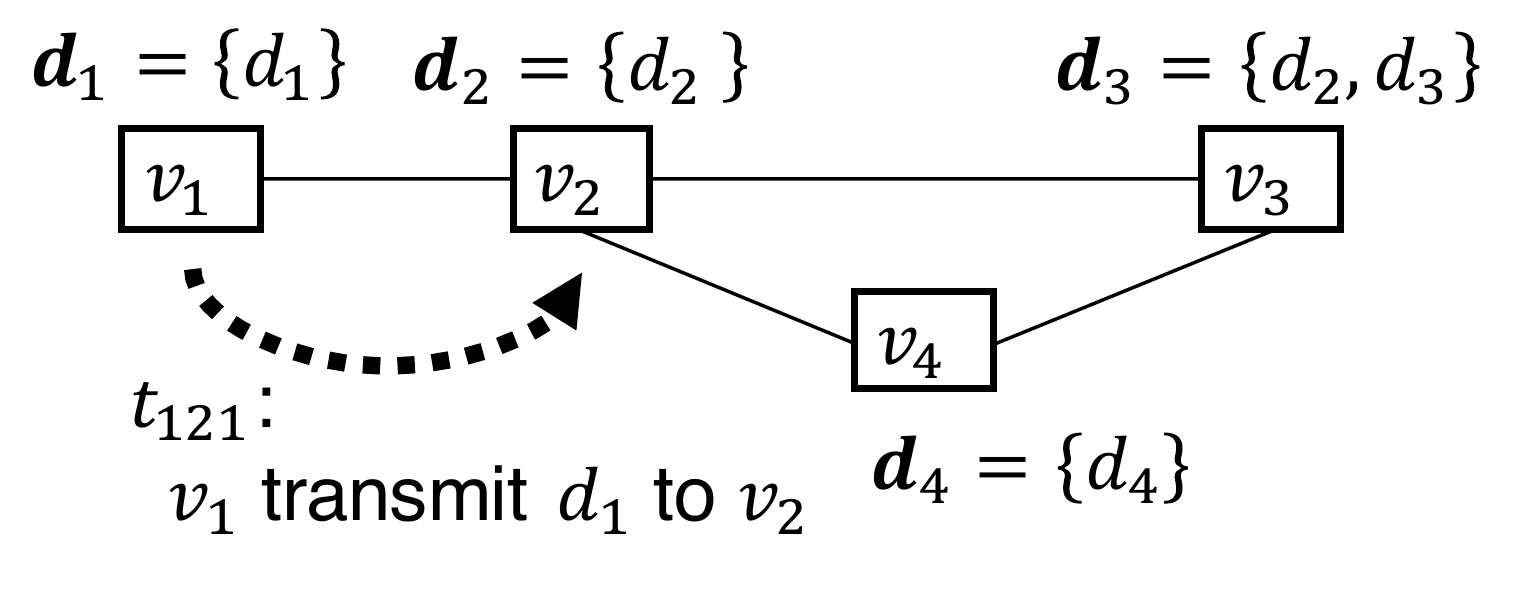}{}
    \label{fig:veh_graph}
  }
  \hfil
  \subfloat[Scheduling graph $\SchedGraph$]{
    \includegraphics[width=0.17\textwidth]{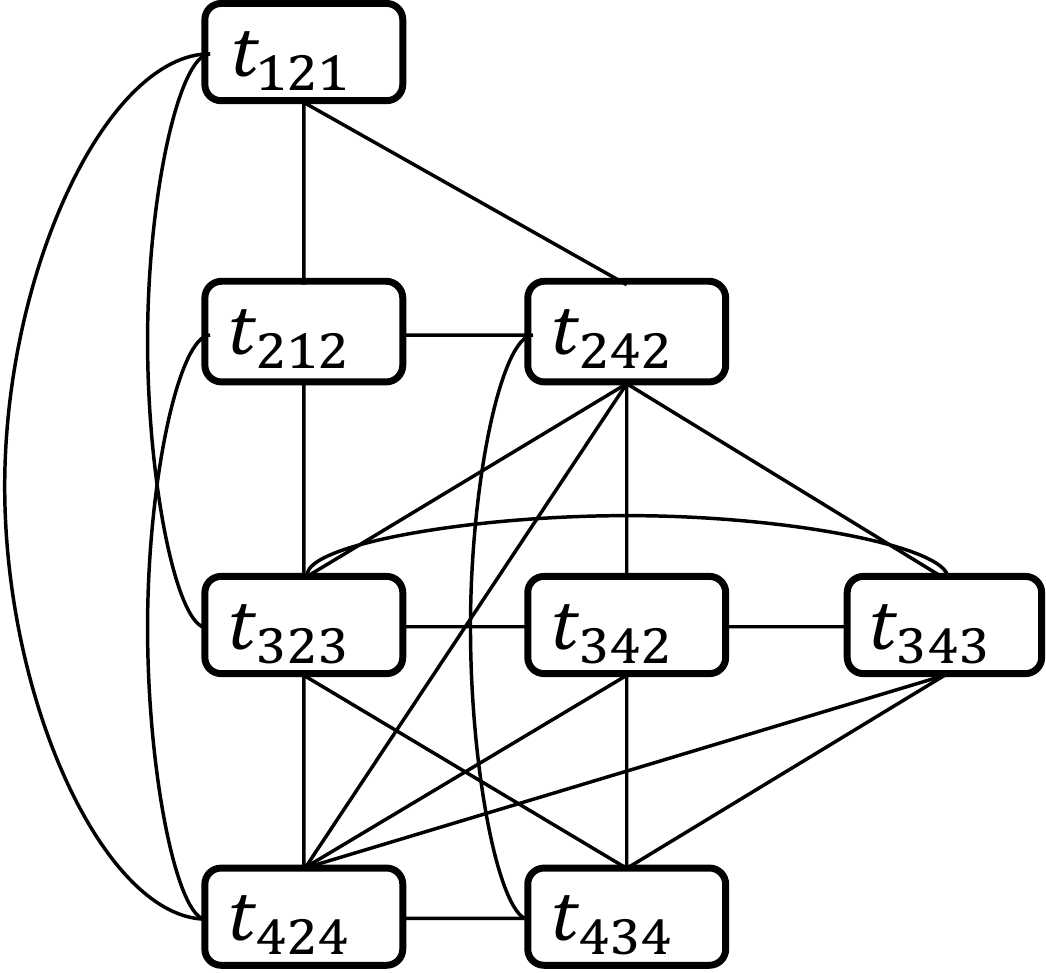}{}
    \label{fig:sched_graph}
  }
  \caption{Example of scheduling graph. $\op_{ijk}$ represents a transmission
          in which vehicle $v_i$ sends $d_k$ to vehicle $v_j$.}
  \label{fig:graph}
\end{figure}

For every time slot $\tau$,
the RSU selects transmitter and receiver vehicles from $\Vehicles$,
as well as data $d_k \in \ds_{i,\tau}$
to send for each transmitter vehicle $v_i$.
This selection is calculated by solving the MWIS problem
for the scheduling graphs, which are constructed as follows:

Algorithm~\ref{alg:sched_graph} is utilized
to construct scheduling graphs for each time slot $\SchedGraph[,\tau]\coloneqq\(\Operation_\tau,\Conflict_\tau,\weight\)$ from $\VehGraph$,
where $\Operation_\tau$, $\Conflict_\tau$, and $\weight$ denote
the set of vertices, set of edges,
and vertex weighting function such that $\weight\colon\Operation_\tau\to\mathbb{R}^{+}$, respectively.
$\mathbb{R}^{+}$ is the set of positive real numbers.
A transmission $\op_{ijk} \in \Operation_\tau$ 
represents a set containing transmitter $v_i$, receiver $v_j$, and data $d_k$,
meaning $v_i$ transmits $d_k$ to $v_j$.
Each element in $\Conflict_\tau$ represents
a conflict between two transmissions,
meaning they cannot be performed concurrently.
Further details are explained in Section \ref{sec:conflict}.
The weight of vertex $\weight(\op_{ijk})$ represents the priority of each transmission,
and its definition is described in Section \ref{sec:priority}.
Figure~\ref{fig:sched_graph} presents an example of a scheduling graph
constructed from the vehicular network graph shown in Fig.~\ref{fig:veh_graph}.
From lines 2--10 in Algorithm \ref{alg:sched_graph},
a set of transmissions $\Operation_\tau$ is obtained
by listing all directly connected pairs of vehicles
(i.e., neighbors in $\VehGraph$)
and data not possessed by receivers.
Next, the conflict between each pair of transmissions is calculated
in lines 11--15.

Algorithm~\ref{alg:scheduling} is a scheduling algorithm utilizing $\SchedGraph[,\tau]$.
In each time slot,
a set of transmissions $\IsOp_\tau \subset \Operation_\tau$ is selected.
After the transmissions are performed,
the datasets $\ds_{i,\tau+1}$ are updated utilizing (\ref{eq:updatedataset}),
and $\SchedGraph$ is recalculated based on the updated $\ds_{i,\tau+1}$.
We describe our method for selecting transmissions in the following subsection.
After the scheduling for all time slots in the data update interval is completed,
each vehicle follows the determined schedule during the data sharing period.

\subsection{Priority of Transmissions} \label{sec:priority}
To perform scheduling,
the controller calculates the MWIS of $\SchedGraph[,\tau]$
to increase the number of transmissions performed in each time slot.
Independent sets of $\SchedGraph[,\tau]$
represent sets of non-conflicting transmissions
and thus, maximum transmissions that can be performed concurrently
can be obtained by solving the maximum independent set (MIS) problem.
The MIS problem is a special case of MWIS,
where the weight function $\weight$ is a constant function
(i.e., $\weight(\op_{ijk}) = 1, \forall \op_{ijk} \in \Operation_\tau$).
We refer to the data sharing algorithm with MIS as max transmission scheduling.
Although max transmission scheduling
maximizes the number of transmissions,
it does not consider the perceivable region represented by the perceptual data.
When $\NumSlot$ is small due to the limit of the data sharing period,
the algorithm stops before all data are shared with all vehicles.
In such cases,
vehicles perceive their environments based on limited information
that covers only a limited area of the intersection.

To increase the perceivable area in such cases,
prioritization for transmitted data can be implemented.
Data from near the intersection tend to overlap with each other,
because the vehicle density near intersections
is higher than that far from intersections.
Therefore, it is inefficient to forward data representing areas near intersections.
We propose assigning a high priority to data that represent areas far from the intersection
and lower priority to data that represent areas near the intersection.
We refer to the algorithm with a priority function as max distance scheduling.
Distance priority can be represented
as a weight function $\weight$ of transmissions $\op_{ijk}$ as follows:
\begin{align}
  \weight(\op_{ijk}) &\coloneqq \dist(p_k, \centerpos),
\end{align}
where $p_k$, $\centerpos$, and $\dist(a,b)$ denote
the position of $v_k$,
center of the intersection,
and distance between positions $a$ and $b$,
respectively.

Although the MIS and MWIS problem are NP-hard problems,
it has been proven that a simple greedy algorithm
can approximately solve these problems
with a guaranteed performance ratio of greater than or equal to $1/\Delta$,
where $\Delta$ denotes the maximum degree of any vertex in the graph
\cite{noteMWIS}.
Thus, we adopt a greedy approach in our proposed scheduling algorithm.

\subsection{Conflict Rule of Interference} \label{sec:conflict}
When constructing $\SchedGraph[,\tau]$,
conflicts between transmissions $\op_{ijk} \in \Operation_\tau$ must be determined
to obtain the set of edges $\Conflict_\tau$.
The rules used to determine the conflicts are referred to as conflict rules.
The basic conflict rules are defined as follows:
\begin{itemize}
  \item[(a)] A transmitter cannot transmit different data at the same time
             or transmit data to different receivers because of the use of a narrow-beam directional antenna:
             $\op_{ijk}$ conflicts with $\op_{i'j'k'}$ if $i=i'$.
  \item[(b)] A receiver cannot receive data from multiple transmitters:
             $\op_{ijk}$ conflicts with $\op_{i'j'k'}$ if $j=j'$.
  \item[(c)] A vehicle cannot transmit and receive data simultaneously
             because of half-duplex communication:
             $\op_{ijk}$ conflicts with $\op_{i'j'k'}$ if $i=j' \lor j=i'$.
\end{itemize}
Because the original algorithm proposed in \cite{coopDataSched} assumed a DSRC channel and omnidirectional antennas,
the following conflict rule was added to the basic rules:
\begin{itemize}
  \item[(d)] A receiver near a transmitter cannot receive data from other transmitters:
             $\op_{ijk}$ conflicts with $\op_{i'j'k'}$ if $v_j \in \Neighbor(v_i') \lor v_j' \in \Neighbor(v_i)$,
             where $\Neighbor(v_i)$ denotes the neighbors of $v_i$.
\end{itemize}
However, this conflict rule does not match our problem because we assume mmWave V2V communications.

Considering the narrow beam width and high attenuation of mmWave communications,
it seems that the radio interference between two transmissions is negligible.
In this case, the conflict set $\Conflict_\tau$ is defined only by the basic rules: (a), (b), and (c).
However, interference sometimes occurs when an interferer is near a receiver
or the transmission direction of the desired and interfering signal are nearly parallel.

In order to overcome interference,
we design a conflict rule that reflects mmWave radio characteristics.
However, it is difficult to estimate SINR during scheduling
because interference cannot be calculated
before all the transmitters and their antenna directions are determined.
We propose an approximation method that can be adopted for our conflict rules,
which are defined for only two transmissions and utilized when constructing the scheduling graphs.

Considering the narrow beam width and high attenuation of mmWave radio signals,
we assume that the largest interference signal is the main factor of SINR.
Therefore, SINR can be approximated as follows:
\begin{align}
  \SINR_{i,j} &\coloneqq \frac{\PowerRx^{(i,j)}}{\BandWidth \Noise + \displaystyle{\sum_{k\neq i,j}\Itfr_k}} \\
          &\approx \frac{\PowerRx^{(i,j)}}{\BandWidth \Noise + \displaystyle{\max_{k\neq i,j} \Itfr_k}},
\end{align}
where $\SINR_{i,j}$, $\PowerRx^{(i,j)}$, and $\Itfr_k$ denote
the SINR at vehicle $v_i$, whose desired signal comes from $v_j$,
received signal strength of desired signals from $v_j$ at vehicle $v_i$,
and interference power from vehicle $v_k$, respectively.
Although knowledge regarding all interference signals seems to be required when calculating $\max_{k\neq i,j} \Itfr_k$,
a conflict rule can be designed between pairs of transmissions
by assuming that the currently considered interferer is the largest one.
The conflict rule reflecting interference is designed as follows:
\begin{itemize}
  \item[(d')] A receiver cannot receive data when interfering signals are large:
              $\op_{ijk}$ conflicts with $\op_{i'j'k'}$
              if $\sinr(\op_{ijk},\op_{i'j'k'}) \leq \Theta \lor \sinr(\op_{i'j'k'},\op_{ijk}) \leq \Theta$,
              where $\sinr(\op_{ijk},\op_{i'j'k'}) \coloneqq \PowerRx^{(j,i)}/(\BandWidth\Noise+\Itfr_{i'})$
              and $\Theta \coloneqq 2^{\Rate/\BandWidth} - 1$.
\end{itemize}
Consider the interfering signals from $v_{\jtx}$ and $v_{\ktx}$ to $v_{\irx}$,
where $\Itfr_{\jtx} < \Itfr_{\ktx}$.
$v_{\jtx}$ and $v_{\ktx}$ attempt to transmit signals to $v_{\jrx}$ and $v_{\krx}$, respectively,
and $v_{\irx}$ receives signals from $v_{\itx}$.
If $\sinr(\op_{\itx\irx a},\op_{\ktx\krx b}) \leq \Theta$,
then the concurrent transmission of $\op_{\itx\irx a}$ and $\op_{\ktx\krx b}$ cannot be scheduled by the proposed algorithm.
Therefore, when calculating the interference from $v_{\jtx}$ to $v_{\irx}$,
we do not need to consider interference from $v_{\ktx}$,
and the interference from $v_{\jtx}$ is assumed to be the largest.
This assumption can be extended inductively for more than three transmitters.

\subsection{Required Time Slot for Complete Data Sharing}
In this section,
we discuss the situation where sufficient time slots are available to complete data sharing.
First, we prove that 
the proposed scheduling algorithm terminates in finite time
and that all data are shared with all vehicles
if the data sharing time is not limited and the vehicular network graph $\VehGraph$ is connected.
We then discuss the bounds for the required number of time slots.

\begin{theorem}{}
  If $\VehGraph$ is connected,
  the proposed data sharing algorithm terminates in finite time
  and all the data initially possessed by vehicles
  are shared with all vehicles when the algorithm terminates,
  which can be expressed as follows:
  \begin{align}
    \forall i, \ds_{i,\tend} = \{d_1, \dots, d_{\Nveh} \}, \label{eq:data_end}
  \end{align}
  where $\tend$ denotes the step count at the end of Algorithm~\ref{alg:scheduling}.
\end{theorem}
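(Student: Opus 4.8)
The plan is to establish the two assertions — finite termination and complete sharing — separately, using a monotone progress measure for the former and a connectivity argument for the latter.

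First, for termination, I would introduce the progress measure $\Phi(\tau) \coloneqq \sum_{i=1}^{\Nveh} |\ds_{i,\tau}|$, the total number of vehicle–datum pairs held across the network. Since the update rule (\ref{eq:updatedataset}) only adds data and never removes any, $\Phi$ is nondecreasing. The key observation is that whenever the loop body of Algorithm~\ref{alg:scheduling} executes, $\Operation_\tau \neq \emptyset$, so the scheduling graph $\SchedGraph[,\tau]$ has at least one vertex; because every weight $\weight(\op_{ijk})$ is positive, any maximum weighted independent set (exact or its greedy approximation) must contain at least one vertex $\op_{ijk}$. By the construction in Algorithm~\ref{alg:sched_graph}, such a vertex satisfies $d_k \in \ds_{i,\tau}$ and $d_k \notin \ds_{j,\tau}$, so performing it strictly enlarges $\ds_{j}$ and hence $\Phi(\tau+1) \ge \Phi(\tau) + 1$. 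Since each of the $\Nveh$ vehicles holds at most $\Nveh$ data, $\Phi$ is bounded above by $\Nveh^2$, so the loop runs at most $\Nveh^2 - \Nveh$ times, yielding finite termination (and incidentally an $O(\Nveh^2)$ bound on $\tend$, matching the discussion promised later). I would emphasize that this argument needs only that the selected independent set be nonempty, so it is insensitive to whether exact or approximate MWIS is used.

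Second, for complete sharing, I would analyze the termination condition. When the loop exits with time unlimited, $\Operation_\tend = \emptyset$. By Algorithm~\ref{alg:sched_graph}, this means that for every ordered pair $(v_i, v_j)$ with $v_j$ a neighbor of $v_i$ and every $d_k \in \ds_{i,\tend}$ we have $d_k \in \ds_{j,\tend}$, i.e., $\ds_{i,\tend} \subseteq \ds_{j,\tend}$. Because the edges of $\VehGraph$ are undirected, the same relation holds with $i$ and $j$ interchanged, giving $\ds_{i,\tend} = \ds_{j,\tend}$ for every edge $\{v_i,v_j\} \in \Link$. Invoking connectivity of $\VehGraph$ and propagating this equality along paths, I obtain a single common dataset $\ds^{\ast}$ with $\ds_{i,\tend} = \ds^{\ast}$ for all $i$.

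Finally, I would identify $\ds^{\ast}$. Monotonicity of the datasets together with the initialization $\ds_{i,0} = \{d_i\}$ gives $d_i \in \ds_{i,\tend} = \ds^{\ast}$ for every $i$, so $\{d_1,\dots,d_{\Nveh}\} \subseteq \ds^{\ast}$; conversely, no datum outside $\{d_1,\dots,d_{\Nveh}\}$ is ever created, so $\ds^{\ast} \subseteq \{d_1,\dots,d_{\Nveh}\}$, which together establish (\ref{eq:data_end}). The main obstacle is the middle step: making precise that an empty transmission set forces dataset equality across \emph{every} edge, and then using connectivity (rather than merely the existence of edges) to lift this local equality to a global one. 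The directionality bookkeeping in Algorithm~\ref{alg:sched_graph} and the undirectedness of $\Link$ must be handled carefully so that the one-sided subset relation can be upgraded to an equality before the path-propagation argument is applied.
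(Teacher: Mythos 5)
Your proof is correct, but the second half takes a genuinely different route from the paper's. The termination half is essentially identical: the paper uses the same potential $n_\tau \coloneqq \sum_{i=1}^{\Nveh}|\ds_{i,\tau}|$, the same upper bound $\Nveh^2$, and the same observation that a nonempty vertex set yields a nonempty selected independent set. For completeness, however, the paper argues by contraposition through a frontier-edge lemma: if some $v_i$ lacks $d_j$, then along a path from $v_j$ (which always holds $d_j$) to $v_i$ there must exist an edge $\{v_\alpha, v_\beta\} \in \Link$ with $d_j \in \ds_\alpha$ and $d_j \notin \ds_\beta$, so $\op_{\alpha\beta j} \in \Operation_\tau$ and the loop cannot have exited; hence at termination every vehicle holds every datum. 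You instead characterize the terminal state directly: $\Operation_{\tend} = \emptyset$ forces $\ds_{i,\tend} \subseteq \ds_{j,\tend}$ for every ordered neighbor pair, undirectedness of $\Link$ upgrades this to equality across every edge, and connectivity propagates the equality to a single common dataset, which monotonicity plus the initialization $\ds_{i,0} = \{d_i\}$ pin down as $\{d_1, \dots, d_{\Nveh}\}$. The two arguments are contrapositive duals and equally rigorous; the paper's version exhibits a concrete available transmission, making vivid the invariant ``the algorithm does not terminate while sharing is incomplete,'' whereas your fixed-point argument proves the slightly stronger structural fact that at termination the datasets are \emph{equal}, and it generalizes immediately to disconnected $\VehGraph$ (each component ends with the union of its own data), which also explains the early-termination cases the paper observes in Fig.~\ref{fig:step_cdf}. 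The two caveats you flag are exactly the right ones and are handled correctly: the one-sided inclusion must be symmetrized using both orientations of each undirected edge before propagating along paths, and the loop guard $\tau \le \NumSlot$ must be regarded as inactive (time unlimited) so that loop exit indeed implies $\Operation_{\tend} = \emptyset$.
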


\begin{proof}{}
  First, we prove that the proposed algorithm terminates in finite time
  and then, we prove that all vehicles possess all data at the end of the algorithm.

  Let $n_\tau$ denote the total size of the dataset $\ds_{i,\tau}$, defined as
  $n_\tau \coloneqq \sum_{i=1}^{\Nveh} |\ds_{i,\tau}|$,
  where $|\cdot|$ represents the cardinality of a set.
  When $\op_{ijk}$ is performed, meaning vehicle $v_j$ receives data $d_k$,
  $n_\tau$ is updated as $n_{\tau+1} \leftarrow n_\tau + 1$
  because $\Operation_\tau$ is constructed
  from all elements in $\op_{ijk}$ that satisfy $d_k \notin \ds_{j,\tau}$
  (lines 5--7 in Algorithm \ref{alg:sched_graph}).
  Let $m_\tau \coloneqq |\IsOp_\tau|$ denote the number of transmissions selected by the RSU.
  Then, $n_\tau$ is updated as $n_{\tau+1} \leftarrow n_\tau + m_\tau$ in each time slot.
  Meanwhile, the maximum value of $n_\tau$ is bounded by $\Nveh^2$.
  Therefore, the algorithm terminates in finite time
  if at least one transmission is selected in each time slot.

  Next, we prove that if there exists a vehicle
  that does not possess all data,
  at least one transmission can be performed.
  Assume $v_i$ does not possess $d_j$, which means $d_j \notin \ds_i$.
  Then, there exists a connected pair $\{v_\alpha, v_\beta\} \in \Link$ on the paths between $v_i$ and $v_j$
  that satisfies $d_j \in \ds_\alpha \land d_j \notin \ds_\beta$
  because at least $v_j$ possesses $d_j$.
  Note that the paths between $v_i$ and $v_j$ exist because $\VehGraph$ is connected.
  Now, we have $\op_{\alpha \beta j} \in \Operation_\tau$
  because $v_\alpha$ possesses $d_j$ and $v_\beta$ does not possess $d_j$,
  meaning $\Operation_\tau \neq \emptyset$.
  An independent set of a graph is not $\emptyset$
  if a vertex set of the graph is not $\emptyset$.
  Therefore, $\SchedGraph[,\tau]$ has an independent set whose size is greater than zero.

  If not all vehicles possess all data, a transmission can be performed
  and thus, the algorithm does not terminate.
  When the algorithm terminates, all vehicles possess all data.
  Because it is guaranteed that
  the algorithm always terminates in finite time,
  all data can be shared with all vehicles in finite time.
  \QED
\end{proof}

Next, we reveal the bounds of $\tend$.
\begin{corollary}{}
  If the vehicular network graph $\VehGraph$ is connected,
  $\tend$ is bounded as,
  \begin{align}
    \frac{\Nveh^2 - \Nveh}{\lfloor\Nveh/2\rfloor} \leq \tend \leq \Nveh^2 - \Nveh, \label{eq:bound}
  \end{align}
  where $\lfloor\cdot\rfloor$ represents the floor function.
\end{corollary}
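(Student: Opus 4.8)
The plan is to count the total number of transmissions performed over the entire run of Algorithm~\ref{alg:scheduling}, and separately to bound how many transmissions can be scheduled in a single time slot; the two bounds on $\tend$ then drop out of a pigeonhole-style argument. Throughout I reuse the two quantities introduced in the proof of the theorem, namely $n_\tau \coloneqq \sum_{i=1}^{\Nveh}|\ds_{i,\tau}|$ and $m_\tau \coloneqq |\IsOp_\tau|$.

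First I would pin down the total number of transmissions exactly. The datasets are initialized as $\ds_{i,0}\leftarrow\{d_i\}$, so $n_0=\Nveh$, and by the theorem every vehicle holds all $\Nveh$ data at termination, so $n_{\tend}=\Nveh^2$. Since each performed transmission $\op_{ijk}$ increases $n_\tau$ by exactly one---lines~5--7 of Algorithm~\ref{alg:sched_graph} admit only those $\op_{ijk}$ with $d_k\notin\ds_{j,\tau}$---telescoping over the loop gives $\sum_{\tau=0}^{\tend-1} m_\tau = n_{\tend}-n_0 = \Nveh^2-\Nveh$.

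Next I would bound $m_\tau$ from below and above. The lower bound $m_\tau\ge 1$ is precisely what the theorem's proof establishes: as long as not all data are shared, $\SchedGraph[,\tau]$ has a nonempty independent set, so at least one transmission is scheduled in every slot before termination. For the upper bound, conflict rules (a), (b), and (c) together force every vehicle to participate in at most one selected transmission per slot: rule~(a) forbids a vehicle from being the transmitter of two transmissions, rule~(b) forbids it from being the receiver of two, and rule~(c) forbids it from transmitting and receiving simultaneously. Hence the transmissions in $\IsOp_\tau$ are vertex-disjoint over $\Vehicles$, and since each transmission consumes two distinct vehicles, $m_\tau\le\lfloor\Nveh/2\rfloor$.

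Finally I would combine these facts. Substituting $m_\tau\le\lfloor\Nveh/2\rfloor$ into the total yields $\Nveh^2-\Nveh=\sum_{\tau=0}^{\tend-1}m_\tau\le\tend\lfloor\Nveh/2\rfloor$, which rearranges to the lower bound $\tend\ge(\Nveh^2-\Nveh)/\lfloor\Nveh/2\rfloor$; substituting $m_\tau\ge 1$ yields $\Nveh^2-\Nveh\ge\tend$, the upper bound. The main obstacle is the upper bound on $m_\tau$: one must argue carefully that rules (a)--(c) really do make the scheduled transmissions a matching on the vehicle set, since it is this structural fact---not the interference rule (d')---that supplies the factor $\lfloor\Nveh/2\rfloor$. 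Everything else is bookkeeping on $n_\tau$.
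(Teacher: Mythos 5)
Your proposal is correct and follows essentially the same route as the paper's own proof: telescoping $n_\tau$ from $n_0=\Nveh$ to $n_{\tend}=\Nveh^2$ via $n_{\tend}=n_0+\sum_{\tau=0}^{\tend-1}m_\tau$, then sandwiching $m_\tau$ between $1$ and $\lfloor\Nveh/2\rfloor$. Your only addition is to spell out explicitly that conflict rules (a)--(c) make $\IsOp_\tau$ a matching on the vehicle set, a step the paper compresses into the remark that vehicles cannot transmit and receive simultaneously.
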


\begin{proof}{}
  At the beginning of the algorithm, we have $n_0 = \Nveh$.
  From (\ref{eq:data_end}), we have $n_{\tend} = \Nveh^2$.
  Meanwhile, $n_{\tend}$ is also written as
  $n_{\tend} = n_0 + \sum_{\tau=0}^{\tend-1} m_\tau$.
  Because vehicles cannot transmit and receive data simultaneously,
  $m_\tau$ satisfies
  $m_\tau \leq \lfloor\Nveh/2\rfloor$.
  Additionally, $m_\tau$ also satisfies $m_\tau \geq 1$
  because at least one transmission is performed in each time slot.
  Therefore, we have
  $\frac{\Nveh^2 - \Nveh}{\lfloor\Nveh/2\rfloor} \leq \tend \leq \Nveh^2 - \Nveh$.
  \QED
\end{proof}

On one hand, $\tend$ is equal to the upper bound
if only one vehicle transmits data in every time slot.
On the other hand, $\tend$ achieves the lower bound
if half of the vehicles send data in every time slot,
which is the optimal case under the constraint
that each vehicle cannot send and receive data simultaneously.
In Section \ref{sec:results},
simulation results demonstrate that $\tend$ is near the lower bound in many cases.

\begin{table}[t]
  \caption{Simulation parameters}
  \label{tbl:simparams}
  \centering
  \begin{tabular}{cc}
    \hline
    Parameters & Values \\
    \hline
    Number of lanes & 4 \\
    Lane width & 3.5\,m \\
    Sidewalk width & 4\,m \\
    Sensor range $\SensorRange$ & 50\,m \\
    Bandwidth $\BandWidth$ & 2.16\,GHz \\
    Thermal noise $\Noise$ & -174\,dBm/Hz \\
    Data rate $\Rate$ & 1\,Gbit/s \\
    Transmission power $\PowerTx$ & 10\,dBm \\
    Antenna beam width & 15$^\circ$, 30$^\circ$ \\
    \hline
  \end{tabular}
\end{table}

\section{Simulation Results} \label{sec:results}
We evaluated our algorithm through simulations.
In our simulations,
the distribution of inter-vehicle distance followed an exponential distribution.
This assumption was confirmed in \cite{RoutingSparseVANET},
where the authors demonstrated that
the distribution of inter-vehicle distance
follows an exponential distribution
based on empirical data collected in a real environment.
We assumed the average distance between vehicles $\IntVDist$ in each lane
was approximately the same as the stopping distance for traffic safety.
We evaluated the situations where $\IntVDist=20\,\mathrm{m}$ and 40\,m,
which are slightly larger than the stopping distances
when the velocity is 40\,km/h and 60\,km/h, respectively \cite{world2008speed}.
We evaluated an intersection with two roads and four buildings assuming an urban area.
Each road had four lanes and sidewalks on both sides.
Each vehicle was modeled as a rectangle with a size of 1.7\,m $\times$ 4.4\,m.
The $\Nveh$ vehicles closest to the center of the intersection
shared their data with each other.
The number of participants $\Nveh$ was fixed to 20 and 40 when $\IntVDist$ was 20\,m
and $\Nveh$ was fixed to 10 and 20 when $\IntVDist$ was 40\,m.
If $\(\IntVDist, \Nveh\)=\DistNum{20}{20}$ and $\DistNum{40}{10}$,
the perceivable region with shared data covers $(25\,\mathrm{m}+\SensorRange/2)$ from the center of the intersection,
because vehicles within approximately 25\,m of the center of the intersection participate in data sharing.
When the number of vehicles is doubled,
perceivable region covers $(50\,\mathrm{m}+\SensorRange/2)$ from the center of the intersection.
The achievable coverage is sufficient for some applications,
e.g., accident or congestion detection system
with which a driver or self-driving system gets alerts
and stops or slows down the vehicle
when an accident or congestion is detected at the intersection.
We drew lines from the transmitter to the receiver vehicles
and counted the number of blocking vehicles on the lines.
The number of blockers was used to calculate the pass loss
based on the model proposed in \cite{PathLossPrediction}
and construct $\VehGraph$ from (\ref{eq:vehgraph}) and (\ref{eq:link}).
We also assumed that vehicles could not communicate with each other if the buildings blocked their line-of-sight path.
The antenna gain was calculated from the model in \cite{maltsev2010channel}.
The other parameters are listed in Table \ref{tbl:simparams}.

In our simulations, the RSU first determined the scheduling.
Next, vehicles transmitted data based on the scheduling.
When interference occurred, a receiver failed to receive data.
If a transmitter was scheduled to transmit data that it did not possess,
it did not transmit any data during that time slot.

\begin{figure}[t] \centering
  \includegraphics[width=0.41\textwidth]{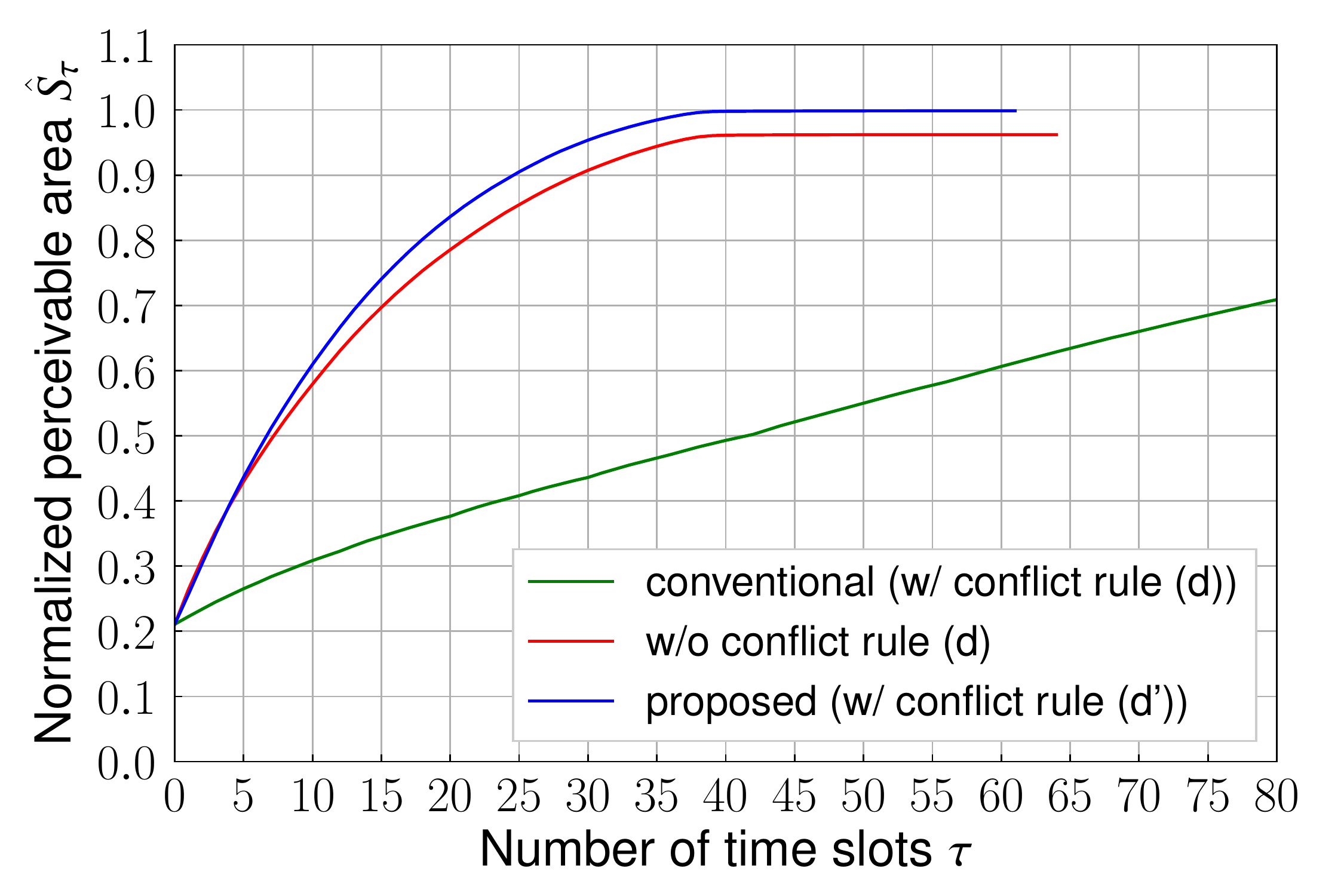}
  \caption{Normalized perceivable area as a function of the number of time slots
           when $\(\IntVDist,\Nveh\)=\DistNum{40}{20}$ and beam width is 15$^\circ$.
           The normalized perceivable area is enlarged by utilizing the conflict rule (d').}
  \label{fig:maxtrans}
\end{figure}
\begin{figure}[t] \centering
  \includegraphics[width=0.41\textwidth]{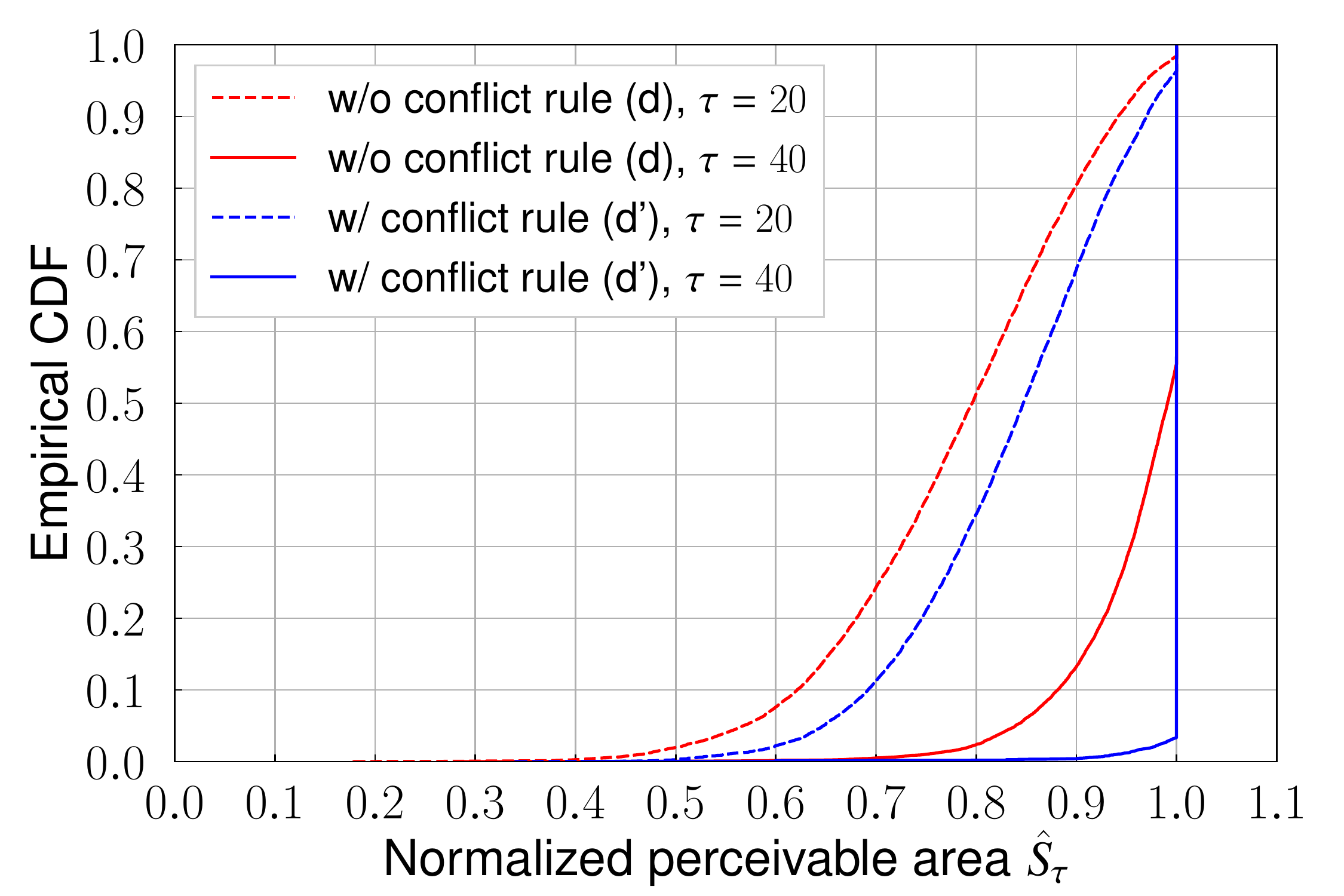}
  \caption{Empirical CDF of normalized perceivable area when $\(\IntVDist,\Nveh\)=\DistNum{40}{20}$.
           Nearly all vehicles achieve 90\% of the perceivable area at $\tau=40$
           when the conflict rule (d') is utilized.}
  \label{fig:cdf}
\end{figure}

Figure~\ref{fig:maxtrans} presents
the normalized perceivable area $\NormalCov[\tau]$ defined in Section \ref{sec:system}
as a function of the number of time slots $\tau$
when $\(\IntVDist,\Nveh\)=\DistNum{40}{20}$.
When the proposed data sharing algorithm was used,
the perceivable areas $\NormalCov[\tau]$ were enlarged by data sharing at first.
Then, $\NormalCov[\tau]$ saturated when $\tau\geq 40$
because the entire region $\RegionAll$ was covered by the shared perceptual data.
In other words, transmitted data after $\tau\geq 40$ did not contribute to enlarging the perceivable area because of overlap.
Finally, the algorithm terminated at $\tau=61$ when all scheduled transmissions were completed.
The proposed algorithm achieved
approximately twice the perceivable area compared with the conventional algorithm at $\tau=40$.
This is because the conventional algorithm assumed microwave communications
and thus, few vehicles could transmit data concurrently
because of the conflict rule (d), which was designed for microwave communications.
In contrast, the proposed method achieved efficient concurrent transmission
because its conflict rules reflect mmWave radio characteristics.
Additionally, adopting the conflict rule (d')
enlarged the perceivable area
because when this rule is not adopted,
certain interferences cannot be avoided and data sharing cannot be completed owing to transmission failure.

Figure~\ref{fig:cdf} shows the empirical cumulative distribution function (CDF)
of the normalized perceivable area $\NormalCov[\tau]$.
When utilizing the conflict rule (d'),
nearly all vehicles achieved 90\% of the normalized perceivable area at $\tau=40$,
whereas only 86\% of the vehicles achieved 90\% of the normalized perceivable area
where interference was not considered when determining the scheduling.

\begin{figure}[t] \centering
  \includegraphics[width=0.41\textwidth]{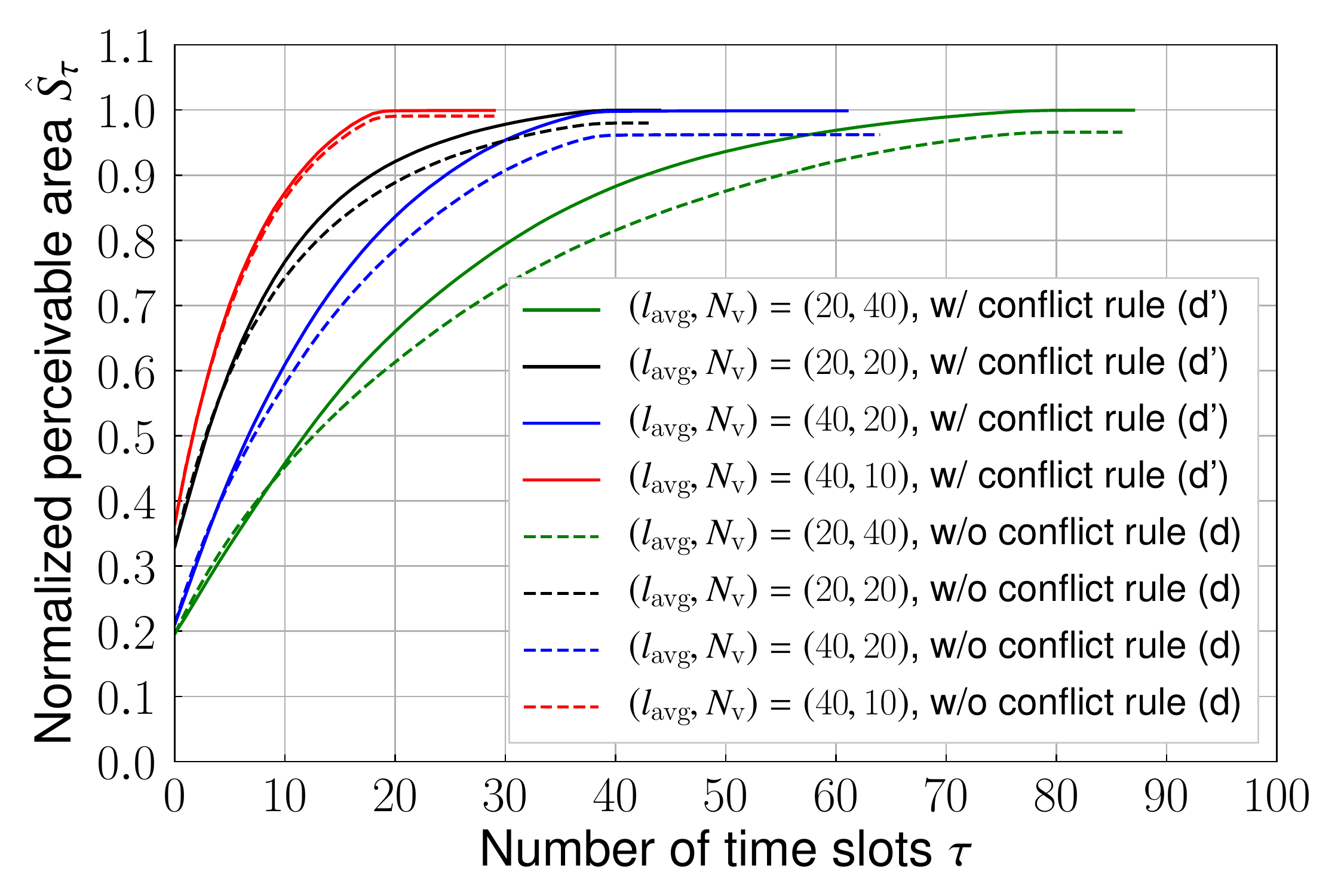}
  \caption{Normalized perceivable area when
           $\(\IntVDist,\Nveh\)=\DistNum{20}{40},$ $\DistNum{20}{20}, \DistNum{40}{20}, \DistNum{40}{10}$
           and the beam width is 15$^\circ$.
           The differences between the performances with and without the mmWave interference conflict rule are
           smaller when the number of participants is small or inter-vehicle distance is large.
           Averages of normalizing factors $\EachCov(\RegionAll)$ in (\ref{eq:nrm})
           for $\(\IntVDist,\Nveh\)=\DistNum{20}{40},\DistNum{20}{20},\DistNum{40}{20},$ and $\DistNum{40}{10}$
           are 6,161\,m$^2$, 3,949\,m$^2$, 5,793\,m$^2$, and 3,616\,m$^2$, respectively.
  }
  \label{fig:vnum}
\end{figure}
\begin{figure}[t] \centering
  \includegraphics[width=0.41\textwidth]{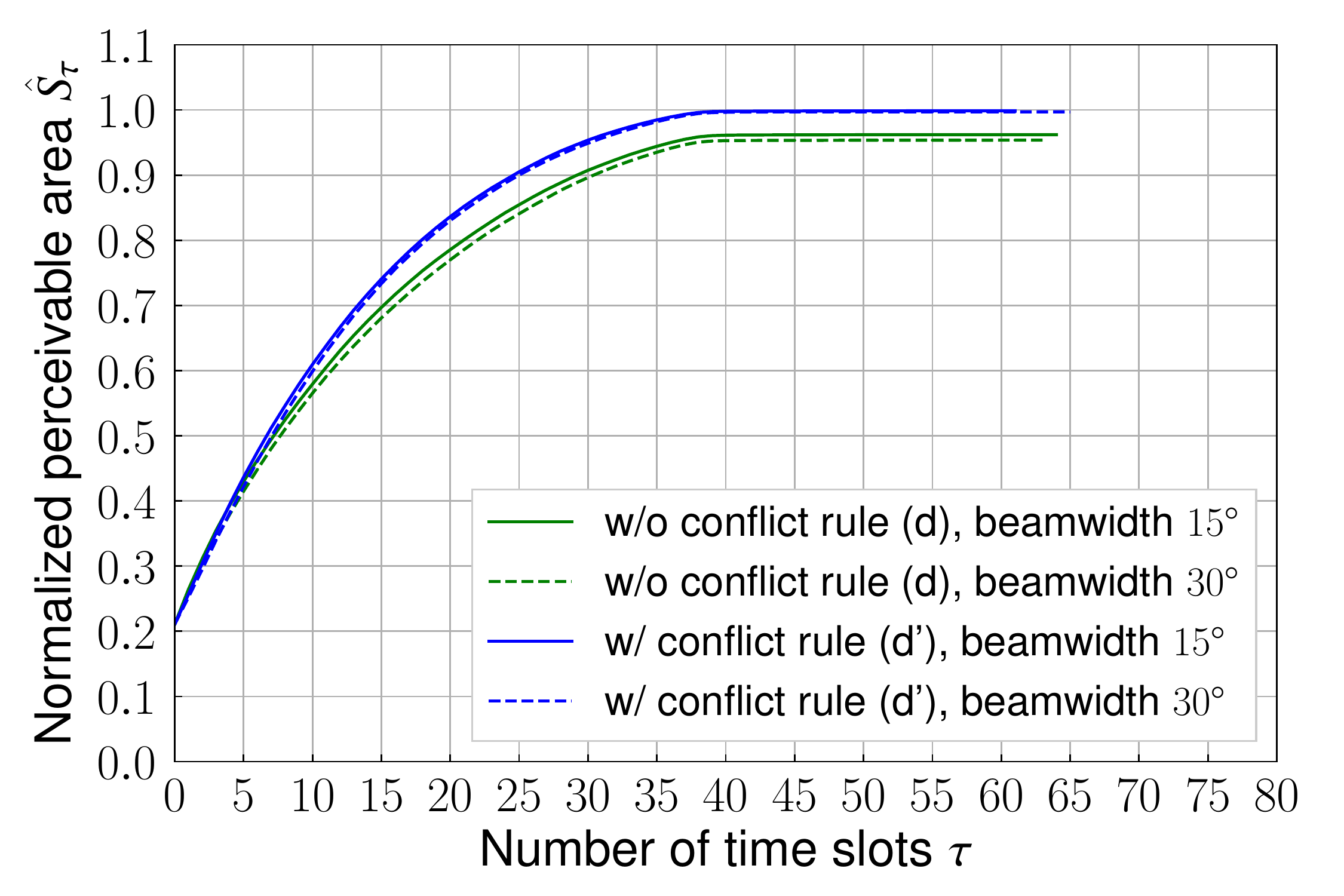}
  \caption{Normalized perceivable area when $\(\IntVDist,\Nveh\)=\DistNum{40}{20}$ and the beam width is 15$^\circ$ and 30$^\circ$.
           The differences between the beam widths of 15$^\circ$ and 30$^\circ$
           are larger when the mmWave interference conflict rule is not adopted
           compared with when the rule (d') is adopted.}
  \label{fig:bw}
\end{figure}

The normalized perceivable areas with different inter-vehicle distances and number of vehicles
are shown in Fig.~\ref{fig:vnum}.
The beam width was 15$^\circ$.
When the vehicle average inter-vehicle distance was the same,
the smaller the number of vehicles, 
the faster the data sharing terminated.
This is confirmed by (\ref{eq:bound}).
It is also shown that the superiority of adopting conflict rule (d')
does not depend on $\IntVDist$ and $\Nveh$.
The performance gain of adopting conflict rule (d') is larger
when the number of participants is large or the inter-vehicle distance is small
because interference is more likely to occur in these cases.

Figure~\ref{fig:bw} presents the normalized perceivable areas
with beam widths of 15$^\circ$ and 30$^\circ$,
when $\(\IntVDist, \Nveh\)=\DistNum{40}{20}$.
The differences between the beam widths of 15$^\circ$ and 30$^\circ$
were larger when the conflict rule for mmWave interference was not adopted
compared with when the rule (d') is adopted.
This is because interferences occur more frequently with a wider beam width,
but the proposed conflict rule (d') can successfully avoid interference.

Figure~\ref{fig:comp_weight} presents differences between the two priority designs
when $\(\IntVDist, \Nveh\)=\DistNum{40}{20}$.
The max distance scheduling design achieved a larger perceivable area
than the max transmission scheduling design when $\tau\leq 36$.
$\NormalCov[\tau]$ of the max distance design was 20\% larger than that of the max transmission design when $\tau=8$.
Based on the max distance scheduling,
data far from the center of the intersection were transmitted at first
and thus, the overlapping regions tended to be smaller than those in the algorithm without such priority control.
Therefore, $\NormalCov[\tau]$ became larger than that in the max transmission scheduling design.
When the number of time slot was limited, such that $\NumSlot \leq 36$,
the max distance scheduling design provided a larger perceivable area during every data update interval.
\begin{figure}[t] \centering
  \includegraphics[width=0.41\textwidth]{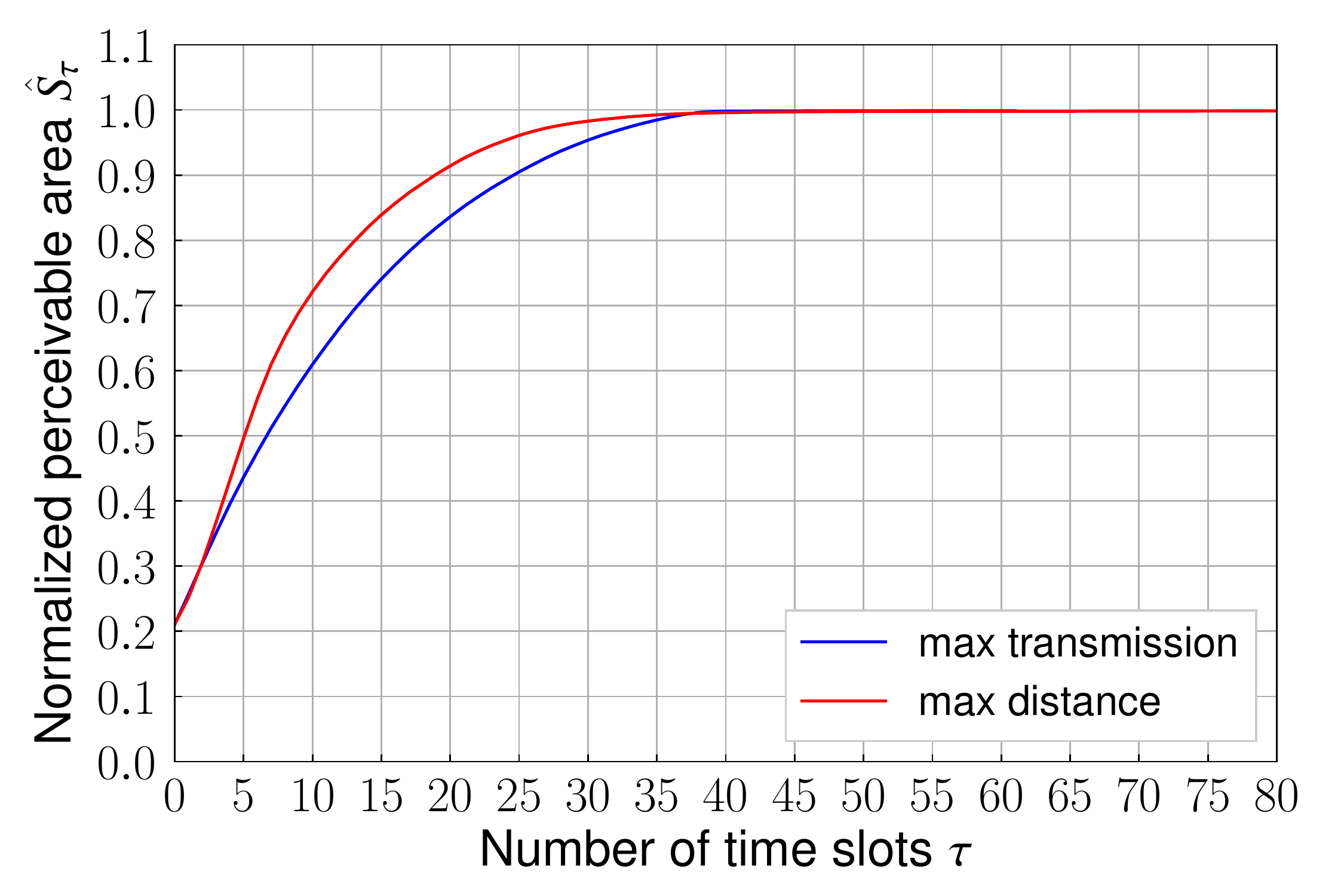}
  \caption{Normalized perceivable area with the different priority designs,
           when $\(\IntVDist,\Nveh\)=\DistNum{40}{20}$ and beam width is 15$^\circ$.
           The conflict rule (d') is adopted.
           The perceivable area can be enlarged by prioritizing data far from the intersection
           when the number of time slot is limited.}
  \label{fig:comp_weight}
\end{figure}

Figure~\ref{fig:step_cdf} shows empirical CDF
of the number of time slots required to share all data, denoted $\tend$,
when $\NumSlot$ is much larger than $\tend$.
Inter-vehicle distance was assumed to be 40\,m.
From (\ref{eq:bound}),
the lower bounds of $\tend$ were calculated as 18, 30, and 38,
for $\Nveh=10,15,20$.
The lower bounds are depicted as black vertical lines in Fig.~\ref{fig:step_cdf}.
In most cases, $\tend$ were closer to the lower bounds than the upper bounds of 90, 210, and 380.
In very few cases, as shown in Fig.~\ref{fig:step_cdf},
the data sharing algorithm terminated after fewer iterations than the lower bounds.
This is because
the vehicular network graphs $\VehGraph$ were disconnected in such cases
and thus, the data sharing algorithm terminated before all data were shared.
The differences between the protocols with and without prioritization
can be observed when $\Nveh=20$.
The max transmission scheduling design achieved efficient data sharing
because it maximized the number of transmitted data at each time slot,
meaning the algorithm terminated faster than the max distance scheduling design
when data sharing time was not limited.

\begin{figure}[t] \centering
  \includegraphics[width=0.41\textwidth]{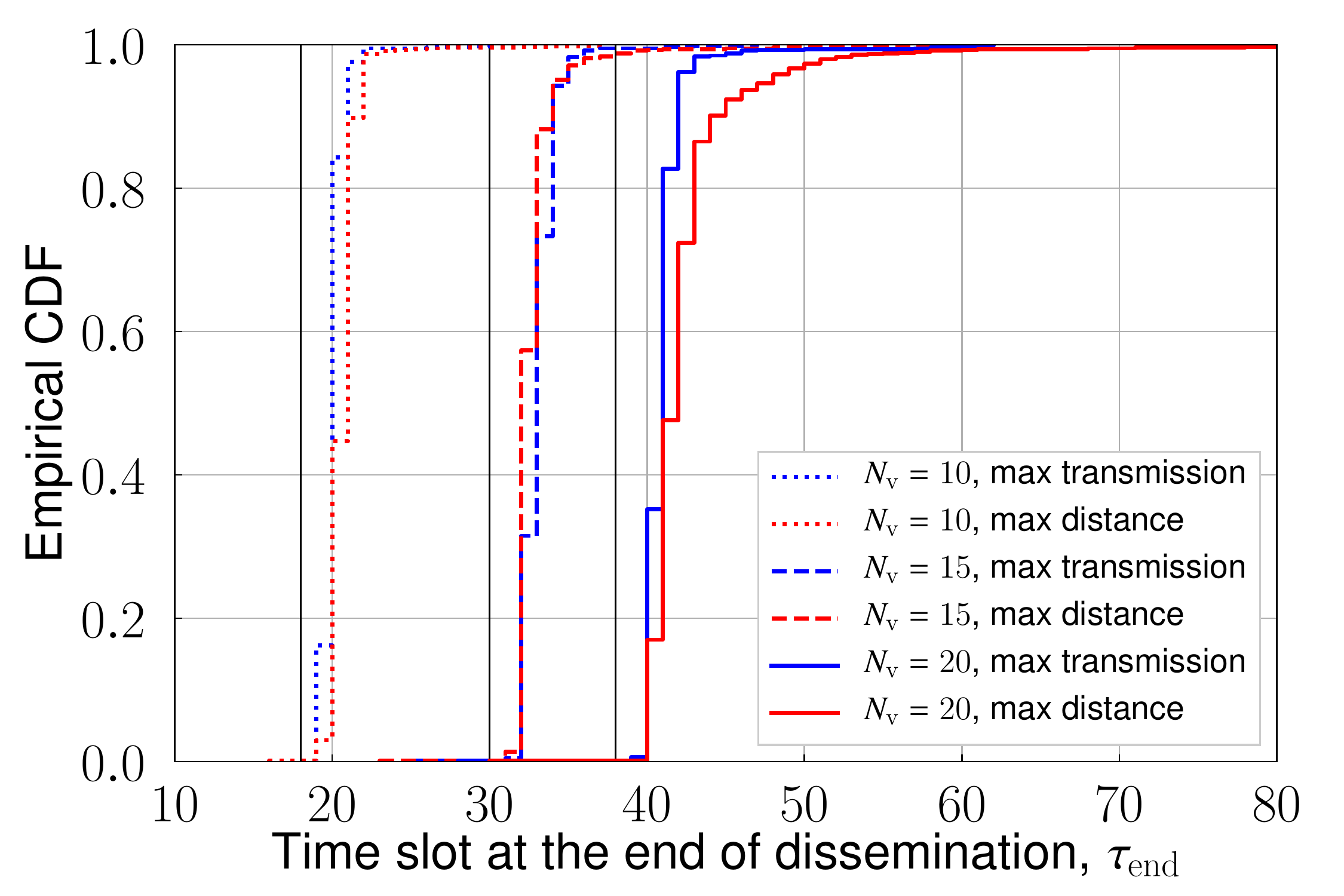}
  \caption{Empirical CDF of the number of time slots required to share all data.
           The conflict rule (d') is adopted.
           The beam width is 15$^\circ$
           and inter-vehicle distance is 40\,m.
           Black vertical lines represent lower bounds.
           The proposed algorithm achieves near-optimal scheduling.}
  \label{fig:step_cdf}
\end{figure}
\section{Conclusion} \label{sec:conclusion}
We proposed a data sharing scheduling method with concurrent transmission for mmWave VANETs for cooperative perception.
We modified the algorithm in \cite{coopDataSched}
by designing a conflict rule that represents mmWave communication characteristics
and a weight function that prioritizes data to be forwarded to enlarge the perceivable area.
Simulation results demonstrated that the proposed conflict rule for scheduling graphs
achieved a larger perceivable area compared with the original rules,
which did not consider directional antennas.
Priority control methods also enlarged the perceivable region
by sharing data that covered areas far from an intersection at first.
The priority control worked efficiently
in situations where the number of time slots was limited.
We also proved that
the proposed algorithms terminate in finite time
and all data can be shared with all vehicles
if a vehicular network is represented as a connected graph and there are sufficient time slots.

For future work,
we will develop a data-aggregation and vehicle-selection method for reducing redundant data transmissions.
The algorithm proposed in this paper transmits data
without considering overlapping regions covered by multiple data.
To suppress the transmission of data representing overlapping regions
can reduce data traffic without reducing the perceivable area.
The data-aggregation method that aggregates some overlapping data to a single datum
also reduces the amount of data to be transmitted.
When the vehicles densely located,
to select vehicles generating data considering their sensor coverage
can reduce data transmissions including the same regions.

Another interest is to develop a scalable distributed scheduling method.
In the proposed algorithm, an RSU, which act as a central controller,
determines the schedule based on information from all vehicles
participating in cooperative perception.
When the number of vehicles is large,
it is difficult for a central controller to obtain accurate information from all vehicles
and to perfectly control whole schedules
because mmWave channels vary rapidly.
Therefore, distributed scheduling including hybrid schemes of centralized and distributed scheduling
should be developed
to reduce the amount of vehicle information transmitted to the RSU
and to allow vehicles to decide scheduling autonomously using their own information.

\section*{Acknowledgment}
This work was supported in part by JSPS KAKENHI Grant Number JP17H03266,
KDDI Foundation, and Tateisi Science and Technology Foundation.

\bibliographystyle{ieicetr}
\bibliography{main}

\profile[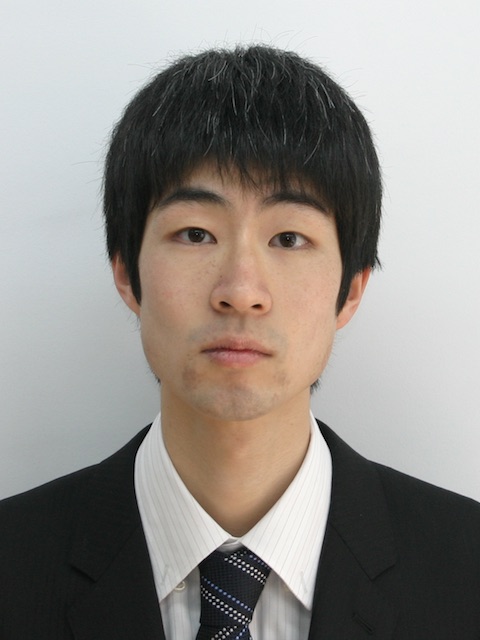]{Akihito Taya}{
received the B.E.\ degree in Electrical and Electronic Engineering from Kyoto University in 2011. 
He received the master degree in Communications and Computer Engineering, Graduate School of Informatics from Kyoto University, Kyoto, Japan, in 2013.
He joined Hitachi, Ltd. in 2013, where he perticipated in the development of computer clusters.
He is currently working toward a Ph.D.\ degree at the Graduate School of Informatics, Kyoto University.
His current research interests include vehicular communications and applications of machine learning.
He is a student member of the IEEE, ACM, and IEICE.
}
\profile[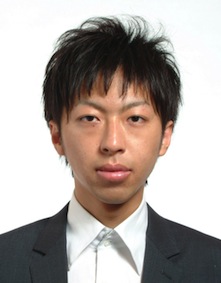]{Takayuki Nishio}{
received the B.E.\ degree in Electrical and Electronic Engineering from Kyoto University in 2010. 
He received the master and Ph.D. degrees in Communications and Computer Engineering, Graduate School of Informatics from Kyoto University, Kyoto, Japan, in 2012 and 2013, respectively.
From 2012 to 2013, he was a research fellow (DC1) of the Japan Society for the Promotion of Science (JSPS).
Since 2013, He is an Assistant Professor in Communications and Computer Engineering, Graduate School of Informatics, Kyoto University. From 2016 to 2017, he was a visiting researcher in Wireless Information Network Laboratory (WINLAB), Rutgers University, United States. 
His current research interests include mmWave networks, wireless local area networks, application of machine learning, and sensor fusion in wireless communications. He received IEEE Kansai Section Student Award in 2011, the Young Researcher's Award from the IEICE of Japan in 2016, and Funai Information Technology Award for Young Researchers in 2016. He is a member of the IEEE, ACM, IEICE.
}
\profile[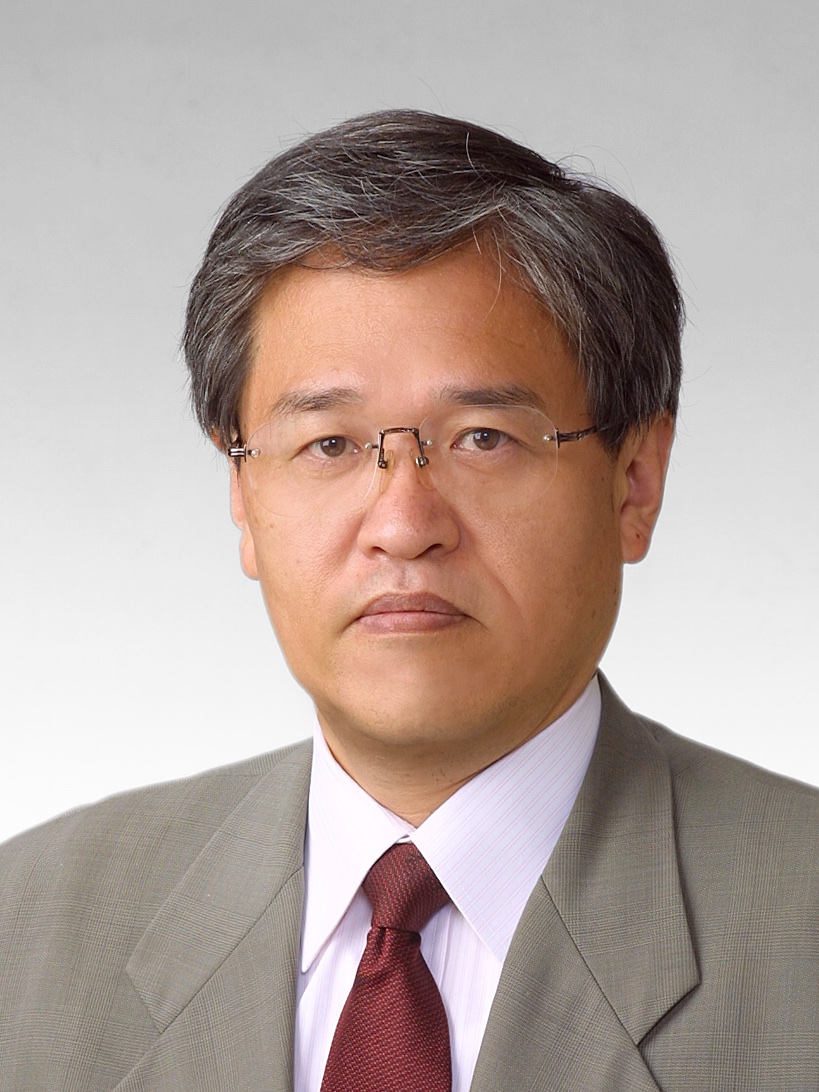]{Masahiro Morikura}{
Masahiro Morikura received his B.E., M.E., and Ph.D.\ degrees in electronics engineering from Kyoto University, Kyoto, Japan in 1979, 1981 and 1991, respectively.
He joined NTT in 1981, where he was engaged in the research and development of TDMA equipment for satellite communications. From 1988 to 1989,
he was with the Communications Research Centre, Canada, as a guest scientist.
From 1997 to 2002, he was active in the standardization of the IEEE 802.11a based wireless LAN.
His current research interests include WLANs and M2M wireless systems.
He received the Paper Award and the Achievement Award from IEICE in 2000 and 2006, respectively.
He also received the Education, Culture, Sports, Science and Technology Minister Award in 2007 and Maejima Award in 2008, and the Medal of Honor with Purple Ribbon from Japan's Cabinet Office in 2015.
Dr. Morikura is now a professor in the Graduate School of Informatics, Kyoto University. He is a member of the IEEE.
}
\pagebreak
\profile[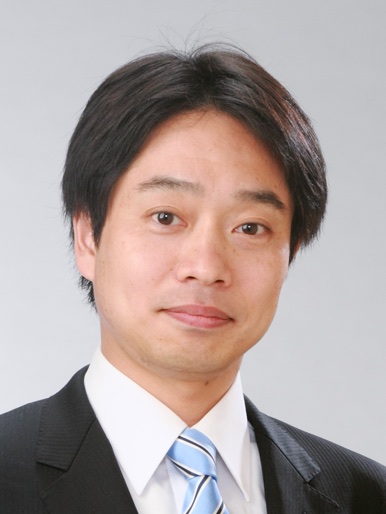]{Koji Yamamoto}{received the B.E.\ degree in electrical and electronic engineering from Kyoto University in 2002, and the M.E.\ and Ph.D.\ degrees in Informatics from Kyoto University in 2004 and 2005, respectively.
From 2004 to 2005, he was a research fellow of the Japan Society for the Promotion of Science (JSPS).
Since 2005, he has been with the Graduate School of Informatics, Kyoto University, where he is currently an associate professor.
From 2008 to 2009, he was a visiting researcher at Wireless@KTH, Royal Institute of Technology (KTH) in Sweden.
He serves as an editor of IEEE Wireless Communications Letters from 2017 and the Track Co-Chairs of APCC 2017 and CCNC 2018.
His research interests include radio resource management and applications of game theory.
He received the PIMRC 2004 Best Student Paper Award in 2004, the Ericsson Young Scientist Award in 2006.
He also received the Young Researcher's Award, the Paper Award, SUEMATSU-Yasuharu Award from the IEICE of Japan in 2008, 2011, and 2016, respectively, and IEEE Kansai Section GOLD Award in 2012.
He is a member of the IEEE.
}

\end{document}